\newcommand{\old}[1]{{}}
\newcommand{\later}[1]{{}}
\def\defn#1{\textit{\textbf{\boldmath #1}}}
\renewcommand{\emph}[1]{\defn{#1}} %redefine \emph to make definitions bold
\newtheorem{theorem}{Theorem}
\newtheorem{lemma}{Lemma}
\newtheorem{corollary}{Corollary}
\newcommand{\IR}{\mathbb{R}}
\newcommand\height{\ensuremath{\mathrm{height}}}
\newcommand\width{\ensuremath{\mathrm{width}}}
\newcommand{\alg}{\textsf{ALG}}
\newcommand{\opt}{\textsf{OPT}}
\begin{document}
% \title{Online Hitting Set for Axis-Aligned Rectangles} 
\title{Online Hitting Set for Axis-Aligned Squares}
\author{}
\author{Minati De\thanks{Dept. of Mathematics, Indian Institute of Technology Delhi, New Delhi, India. Email: \texttt{Minati.De@maths.iitd.ac.in}. Research on this paper was supported by   SERB MATRICS Grant MTR/2021/000584.}
\and
Satyam Singh\thanks{Department of Computer Science, Aalto University, Espoo, Finland. Email: \texttt{satyam.singh@aalto.fi}. Research on this paper was supported by the Research Council of Finland, Grant 363444. }
\and
Csaba D. T\'oth\thanks{Department of Mathematics, California State University Northridge, Los Angeles, CA; and Department of Computer Science, Tufts University, Medford, MA, USA. Email: \texttt{csaba.toth@csun.edu}. Research on this paper was supported, in part, by the NSF award DMS-2154347.}
}
\date{}

\maketitle              % typeset the header of the contribution
\thispagestyle{empty}
%\vspace{-3\baselineskip}
\begin{abstract}
We are given a set $P$ of $n$ points in the plane, and a sequence of axis-aligned squares that arrive in an online fashion. The online hitting set problem consists of maintaining, by adding new points if necessary, a set $H\subseteq P$ that contains at least one point in each input square. We present an $O(\log n)$-competitive deterministic 
algorithm for this problem. The competitive ratio is the best possible, apart from constant factors. In fact, this is the first $O(\log n)$-competitive algorithm for the online hitting set problem that works for geometric objects of arbitrary sizes (i.e., arbitrary scaling factors) in the plane. We further generalize this result to positive homothets of a polygon with $k\geq 3$ vertices in the plane and provide an $O(k^2\log n)$-competitive algorithm.
\end{abstract}

\section{Introduction}  \label{sec:intro}
% \pagenumbering{arabic}

The minimum hitting set problem is one of Karp’s 21 classic NP-hard problems~\cite{GareyJ90}.
In the \emph{minimum hitting set} problem, we are given a set $P$ of elements and a collection $\mathcal{C} = \{S_1, \ldots, S_m\}$ of subsets of $P$, referred to as \emph{ranges}. %, where each $S_i \subseteq P$}. 
Our aim is to find a set $H\subseteq P$ (\emph{hitting set}) of minimal size such that every set $S_i\in \mathcal{C}$ contains at least one element in $H$. 
Motivated by numerous applications in VLSI design, resource allocation, and wireless networks, researchers have extensively studied the problem for geometric objects.
%the geometric setup of the hitting set problem (also known as the geometric hitting set problem).
In the \emph{geometric hitting set} problem, we have $P\subseteq \mathbb{R}^d$ for some constant dimension $d$, and the sets in $\mathcal{C}$ are geometric objects of some type: for example, balls, simplices,  hypercubes, or hyper-rectangles. 
Note that the minimum hitting set problem is dual to the minimum set cover problem in the abstract setting, but duality does not extend to the geometric setting.
%their geometric versions may be very different.} 
% \red{In particular, the geometric hitting set and geometric set cover problems are dual to each other in the special case where the ranges correspond to translates of a convex object.}

In this paper, we study the online hitting set problem for geometric objects.
%(also known as the online geometric hitting set problem). 
In the \emph{online geometric hitting set} problem, the point set $P$ is known in advance, while the objects of $\cal C$ arrive one at a time (without advance knowledge). 
We need to maintain a hitting set $H_i\subseteq P$ for the first $i$ objects for all $i\geq 1$. Importantly, in the online setup, points may be added to the hitting set as new objects arrive, %but once added, 
they cannot be removed (i.e., $H_i\subseteq H_j$ for $i\leq j$). Upon the arrival of a new object $S_{i}\in {\cal C}$, any number of points can be added to the hitting set.  
Depending on whether $P$ is finite~\cite{DeMS24,DeST24,EvenS14,KhanLRSW23} or infinite~\cite{AlefkhaniKM23,CharikarCFM04,DeJKS24,DeS24,DumitrescuGT20,DumitrescuT22}, there are different versions of the online geometric hitting set problem. In this paper, we consider $P$ to be a finite set of points in $\IR^2$.

Let $\alg$ be an algorithm for the online hitting set problem on the instance $(P,\cal C)$. The \emph{competitive ratio} of $\alg$, denoted by $\rho (\alg)$, is the supremum, over all possible input sequences $\sigma$, of the ratio between the size $\alg(\sigma)$ of the hitting set obtained by the online algorithm $\alg$ and the minimum size $\opt(\sigma)$ of a hitting set for the same input\footnote{The \emph{competitive ratio} serves as the \emph{primary measure} of performance of online algorithms, while the \emph{computational complexity} of the algorithm is generally regarded as a \emph{secondary measure}.}: 
\[
    \rho (\alg) = \sup_{\sigma} \left[ \frac{\alg(\sigma)}{\opt(\sigma)}\right].
\]
%Here, $\opt(\sigma)$ denotes a minimum cardinality of a hitting set for the instance $(P,\cal C)$ corresponding to the input sequence $\sigma$, while $\alg(\sigma)$ denotes the size of the solution produced by $\alg$ for the instance $(P,\cal C)$ corresponding to $\sigma$.

\subsection{Related Previous Work} 
Alon et al.~\cite{AlonAABN09} initiated the study of the online hitting set problem 
%(dual of set cover) 
and presented a deterministic algorithm with a competitive ratio of $O(\log |P| \log |\cal C|)$ and obtained almost matching lower bound of $\Omega\left(\frac{\log |P| \log |\cal C|}{\log\log |P| +\log\log |\cal C|}\right)$. While their work addresses the general setting, Even and Smorodinsky~\cite{EvenS14} initiated the study of the online geometric hitting set problem for various geometric objects. They established an optimal competitive ratio of $\Theta(\log |P|)$ when $P$ is a finite subset of $\IR$, and the objects are intervals in $\mathbb{R}$. They also established an optimal competitive ratio of $\Theta(\log |P|)$ when $P$ is a finite subset of $\IR^2$, and the objects are half-planes or congruent disks in the plane. %; and $P$ is a finite set in $\mathbb{R}^2$. 

Later, Khan et al.~\cite{KhanLRSW23} examined the problem for a finite set of integer points $P \subseteq [0,N)^2 \cap \mathbb{Z}^2$ and a collection $\cal C$ of axis-aligned squares $S \subseteq [0, N)^2$ with integer coordinates for $N>0$. 
They developed an $O(\log N)$-competitive deterministic algorithm for this variant.
They also established a randomized lower bound of $\Omega(\log |P|)$, where $P\subset\mathbb{R}^2$ is finite and $\cal C$ consists of translates of an axis-aligned square. 
De et al.~\cite{DeMS24,DeST24} further investigated the problem for a finite set $P\subset \mathbb{R}^2$, where the collection $\mathcal{C}$ consists of geometric objects with scaling factors (e.g., diameters) in the interval $[1, M]$ for some parameter $M>0$. In~\cite{DeMS24}, they considered homothetic copies of a regular $k$-gon (for $k \geq 4$) and developed a randomized algorithm with expected competitive ratio $O(k^2 \log M \log |P|)$. Although regular $k$-gons can approximate disks as $k \to \infty$, this result does not imply a competitive algorithm for disks with radii in $[1, M]$. In~\cite{DeST24}, they addressed this gap by presenting an $O(\log M \log |P|)$-competitive deterministic algorithm for homothetic disks, and further generalized their result to positive homothets of any convex body in the plane with scaling factors in $[1, M]$.

% Researchers have also studied different variants of hitting set problem depending on $P$. When $P=\IR^d$, one can see~\cite{CharikarCFM04,DumitrescuGT20,DumitrescuT22,DeJKS22,DeJKS24}; while for $P=\mathbb{Z}^d$, we refer authors to see~\cite{AlefkhaniKM23,DeS22,DeS24,DeMS24}.

\begin{table}[htb]
    \centering
    \begin{tabular}{||p{2.7cm}|p{4 cm}|p{2.3 cm}|p{3.3 cm}||} 
 \hline
 Finite Point Set & Objects & Lower Bound & Upper Bound \\ [0.5ex] 
 \hline\hline
 $P\subset \IR$ & Intervals in $\IR$ & $\Omega(\log |P|)$~\cite{EvenS14} & $O(\log |P|)$~\cite{EvenS14} \\ 
 \hline
 $P\subset \IR^2$ & Half-planes in $\IR^2$ & $\Omega(\log |P|)$~\cite{EvenS14} & $O(\log |P|)$~\cite{EvenS14} \\ 
 \hline
 $P\subset \IR^2$ & Congruent disks in $\IR^2$ & $\Omega(\log |P|)$~\cite{EvenS14} & $O(\log |P|)$~\cite{EvenS14} \\ 
%  \hline
% $P=\mathbb{Z}^d$    & Translates of a ball in $\mathbb{R}^d$ & $d+1$ ($d\leq 3$)~\cite{DeS24}    & $O(d^4)$~\cite{DeS24}      \\ 
% \hline
% $P=\mathbb{Z}^d$ & Translates of an axis-aligned hypercubes in $\mathbb{R}^d$   & $\Omega(d+1)$~\cite{DeS24} & $O(d^2)$~\cite{DeS24} $(\#)$  \\ 
 \hline
 $P\subseteq [0, N)^2\cap\mathbb{Z}^2$ & Axis-aligned squares %in $[0, N)^2\cap\mathbb{Z}^2$ 
 with integral vertices &  $\Omega(\log |P|)$~\cite{KhanLRSW23} $(\#)$ & $O(\log N)$~\cite{KhanLRSW23}          \\
 \hline
 % $P\subset \IR$ & Homothetic copies of a regular $k$-gon ($k\geq 4)$ with scaling factors in the interval $[1, M]$ & $\Omega(\log |P|)$~\cite{KhanLRSW23} $(\#)$ & $O(k^2\log M \log |P|)$~\cite{DeMS24} $(\#)$\\
 % \hline
 % $P=\mathbb{Z}^d$ & $\alpha$-fat objects in $\mathbb{R}^d$ having  diameters in the interval $[1, M]$ & $\Omega(d\log M)$~\cite{DeMS24} $(\#)$& $O((4\alpha)^d\log M)$~\cite{DeMS24}\\
 $P\subseteq [0, N)^2\cap\mathbb{Z}^2$ & Bottomless rectangles \,\,\,\,\,\,\,\,\,\,
 % (see~\cite[Section~2]{DeST24} for the definition) 
 (of the form $[a,b]\times [-\infty,c]$)
 & $\Omega(\log |P|)$~\cite{EvenS14} & $O(\log N)$~\cite{DeST24}         \\
 \hline
  $P\subset \IR^2$ & Positive homothets of an arbitrary convex body in $\IR^2$ with scaling factors in the interval $[1, M]$ & $\Omega(\log |P|)$~\cite{KhanLRSW23} $(\#)$ & $O(\log M\log |P|)$~\cite{DeST24} \\ 

 \hline
 \hline \hline
 $P\subset \IR^2$ & Axis-aligned squares
& $\Omega(\log |P|)$~\cite{KhanLRSW23} $(\#)$ & $O(\log |P|)$ \hspace{1cm} [\Cref{thm:main}] \\ 
  \hline
$P\subset \IR^2$ & Axis-aligned rectangles of aspect ratio at most  $\varrho\geq 1$ 
& $\Omega(\log |P|)$~\cite{KhanLRSW23} $(\#)$ & $O(\varrho\log |P|)$ \hspace{1cm} [\Cref{thm:main}] \\ 
  \hline
$P\subset \IR^2$ & Positive homothets of a polygon with $k\geq 3$ vertices
& $\Omega(\log |P|)$~\cite{KhanLRSW23} $(\#)$ & $O(k^2\log |P|)$ \hspace{1cm} [\Cref{thm:generalization}] \\ 
 \hline
\end{tabular}
    \caption{Summary of known and new results for the geometric online hitting set problem where $|P|=n$ for some $n\in \mathbb{N}$. $(\#)$ indicates lower bounds for randomized algorithms. Our results are listed in the last three rows.}
    \label{table_1}
    \vspace{-.5 cm}
\end{table}

\subsection{Our Contribution}
We present the first $O(\log n)$-competitive algorithm for the online hitting set problem for a set of $n$ points and geometric objects of arbitrary sizes in the plane; \Cref{table_1} summarizes previous and new results. 
Our algorithm works for axis-aligned squares of arbitrary sizes, and generalizes to axis-aligned rectangles of bounded aspect ratio. 
The \emph{aspect ratio} of a rectangle is the ratio of the length of the longer to that of the shorter side (e.g., the aspect ratio of a square is 1, and the aspect ratio of a $1\times 2$ or a $2\times 1$ rectangle is 2).

\begin{theorem}\label{thm:main}
For every $\varrho\geq 1$, there is an $O(\varrho\log n)$-competitive deterministic algorithm for the online hitting set problem for any set of $n$ points in the plane and a sequence of axis-aligned rectangles of aspect ratio at most $\varrho$. 
\end{theorem}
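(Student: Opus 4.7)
The plan is to replace the coordinate-range grids of Khan et al.\ (which contribute a $\log N$ factor) by a hierarchy driven by $P$ itself, whose depth is only $O(\log n)$, and to run a canonical-cell insertion rule in the style of Even--Smorodinsky's interval algorithm on that hierarchy. The aspect-ratio parameter $\varrho$ will enter only because a $\varrho$-aspect rectangle needs $O(\varrho)$ bounded-aspect cells to cover at any matching scale.

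\textbf{Preprocessing and algorithm.}
In a preprocessing step I would construct a balanced axis-aligned box-decomposition tree $\mathcal{T}$ on $P$ (such as a BBD-tree or a balanced compressed quadtree) with three key properties: (i) depth $O(\log n)$; (ii) each node $v$ is associated with an axis-aligned cell $C_v$ of bounded aspect ratio; (iii) the children of $v$ partition $C_v$ into $O(1)$ sub-cells of geometrically smaller diameter. Consequently every $p \in P$ lies in exactly $O(\log n)$ ancestor cells of $\mathcal{T}$. When a rectangle $R$ of aspect ratio at most $\varrho$ arrives, I identify the coarsest level $i(R)$ whose cell diameters are comparable to the shorter side of $R$, so that $R \cap P$ is covered by $O(\varrho)$ cells $\mathcal{V}(R)$ at level $i(R)$, plus $O(\varrho)$ ``boundary'' cells (handled one or two levels deeper). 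For each cell $C_v$ in this covering family, if $H \cap C_v \cap R = \emptyset$ while $C_v \cap R \cap P \neq \emptyset$, the algorithm inserts some point of $C_v \cap R \cap P$ into $H$.

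\textbf{Analysis.}
To bound the competitive ratio, I fix an optimal hitting set $H^*$ and charge each algorithmic insertion to some point $q^* \in R \cap H^*$ (such a $q^*$ exists because $H^*$ hits $R$), via the level-$i(R)$ cell that contains $q^*$. Since $q^*$ belongs to only $O(\log n)$ cells of $\mathcal{T}$, it suffices to show that each ancestor cell of $q^*$ absorbs only $O(\varrho)$ charges against $q^*$ over the whole run. Summing over $H^*$ then yields $\alg(\sigma) = O(\varrho \log n)\cdot \opt(\sigma)$, as required.

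\textbf{Main obstacle.}
The crux, and the step I expect to be the main difficulty, is the per-cell amortization claim: a single cell $C$ can serve as the covering cell for many rectangles of very different shapes, and an insertion forced by one such rectangle need not hit a later rectangle whose intersection with $C$ is smaller. Controlling this will require choosing the inserted point ``canonically'' within $C$ with respect to the four directions in which future rectangles can cut $C$ (top, bottom, left, right). Effectively, this should reduce the behavior inside each cell to four one-sided sub-instances of bottomless-rectangle type, each of which admits only $O(1)$ amortized charges per OPT point because $C$ has bounded diameter compared to queries assigned at its level. The same mechanism cleanly accommodates the extension from squares ($\varrho = 1$) to rectangles of aspect ratio $\varrho$, by multiplying the per-level covering count by $O(\varrho)$.
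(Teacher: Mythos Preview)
Your proposal shares the right high-level intuition with the paper---use a hierarchical decomposition of $P$ with $O(\log n)$ depth and argue that each $q^*\in\opt$ can only be responsible for $O(\log n)$ ``events''---but the concrete mechanism you describe has a genuine gap, and you yourself flag it.

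The level-assignment step does not work as stated. You want to pick ``the coarsest level $i(R)$ whose cell diameters are comparable to the shorter side of $R$''. But a BBD tree (and likewise a balanced compressed quadtree) achieves depth $O(\log n)$ precisely by \emph{not} tying depth to geometric scale: cells at the same depth can have wildly different diameters, so there is no well-defined level at the scale of $R$, and the claim that $R$ is covered by $O(\varrho)$ cells at level $i(R)$ is unfounded. Conversely, if you switch to a plain quadtree so that level equals scale, the depth becomes $O(\log N)$ (spread) rather than $O(\log n)$, and you are back to Khan et al. This tension---balanced depth versus uniform level size---is exactly the obstacle the problem poses.

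The paper circumvents this by abandoning the level-assignment paradigm entirely. It never matches $R$ to a scale. Instead it tracks, for each new rectangle $R$, the cells that contain a \emph{corner} of $R$ and the cells that $R$ \emph{crosses}; a packing argument shows $R$ crosses only $O(\varrho)$ interior-disjoint BBD cells regardless of their depths. The algorithm maintains a downward-closed set of ``active'' nodes and, whenever $R$ is unhit, activates the highest inactive such cell (and its sibling), adding that cell's $O(1)$ extremal points (leftmost, rightmost, topmost, bottommost in each subrectangle of the cell). The core lemma is then purely combinatorial: if $R$ contains $p\in\opt$ and is not yet hit, some new cell on the root-to-$p$ path gets activated. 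The proof is a case analysis on how $R$ sits relative to the lowest active cell $C_v\ni p$, using that $C_v$'s extremal points already hit $R$ unless $R$ reaches down into a child of $C_v$. This is the precise realization of what you call ``choosing the inserted point canonically with respect to the four directions'', but it is coupled to the corner/crossing activation rule rather than to a per-level covering, and that coupling is what makes the argument close.
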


We further generalize \Cref{thm:main} to positive homothets of a polygon. % with $k\geq 3$ vertices. 
\begin{theorem}\label{thm:generalization}
        Let $M$ be a polygon with $k\geq 3$ vertices. Then there is an $O(k^2\log n)$-competitive deterministic algorithm for the online hitting set problem for any set of $n$ points in the plane, and a sequence of positive homothets of $M$.  
\end{theorem}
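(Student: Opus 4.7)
The plan is to reduce Theorem~\ref{thm:generalization} to Theorem~\ref{thm:main} by triangulating $M$ into $O(k)$ triangles and handling each triangle via Theorem~\ref{thm:main} with aspect-ratio parameter $\varrho = O(k)$. After an affine normalization of $M$ (placing a convex $M$ between concentric disks via John's ellipsoid theorem; a non-convex $M$ can be triangulated directly), fan-triangulate $M$ from its centroid into $k$ triangles $T_1,\ldots,T_k$, each of aspect ratio $\Theta(k)$ because adjacent boundary vertices of the normalized $M$ lie at distance $\Theta(1/k)$ while the fan triangles have height $\Theta(1)$. For every positive homothet $M' = \lambda M + v$, the triangulation scales to $M' = \bigcup_j T_j'$ with $T_j' = \lambda T_j + v \subseteq M'$, so any point of $P$ hitting some $T_j'$ also hits $M'$. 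I would run $k$ sub-algorithms in parallel, one per triangle, and output the union of their hitting sets.

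For each $T_j$, an invertible affine map $\phi_j$ sends $T_j$ to a canonical right triangle, preserving positive homothety and mapping $P$ to a set of $n$ points. In $\phi_j$-coordinates, each positive homothet of $T_j$ becomes a positive homothet of the right triangle, which is sandwiched between an axis-aligned bounding square and an inscribed axis-aligned square of half the side length. Using this sandwich, one can cover each such homothet by $O(1)$ axis-aligned rectangles; because $\phi_j$ stretches by a factor of $\Theta(k)$ (since $T_j$ has aspect ratio $\Theta(k)$), the pulled-back rectangles in the original coordinates have aspect ratio $O(k)$. Invoking Theorem~\ref{thm:main} with $\varrho = O(k)$ yields an $O(k\log n)$-competitive sub-algorithm per triangle.

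Combining $k$ sub-algorithms at $O(k\log n)$ each should give the $O(k^2\log n)$ bound. The main obstacle is the competitive analysis: a point in the polygon $\opt$ hits each $M'$ via only a single triangle of the triangulation, so naively feeding every $T_j'$ to sub-algorithm $j$ may produce sub-problems whose $\opt$ far exceeds the polygon $\opt$. I would address this by routing each instance only to sub-algorithms whose triangle is hittable ($T_j' \cap P \neq \emptyset$) and then charging as follows: for each $p^* \in \opt$ and each triangle $T_j$, the instances $M'$ in which $p^* \in T_j'$ form a sub-sequence on which the singleton $\{p^*\}$ alone suffices, contributing $O(k\log n)$ cost to sub-algorithm $j$ by its competitive ratio restricted to that sub-sequence. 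Since each $p^*$ occupies at most $k$ different triangles across all instances, summing over triangles and points of $\opt$ yields the target total of $O(k^2 \log n) \cdot \opt$.
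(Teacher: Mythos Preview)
Your overall architecture and the charging argument in your last paragraph are essentially the paper's approach (its Lemma~6). The gap is in the reduction from a single triangle $T_j$ to axis-aligned rectangles. A triangle cannot be written as a union of finitely many axis-aligned rectangles contained in it, so ``cover each such homothet by $O(1)$ axis-aligned rectangles'' is impossible with rectangles inside $T_j'$; if instead you take a rectangle containing $T_j'$ (the bounding square), a point of $P$ hitting that rectangle need not lie in $T_j'$, so the sub-algorithm's output is not a valid hitting set for the triangles, hence not for $M$; and the inscribed square may miss $P\cap T_j'$ entirely, so there may be nothing for the sub-algorithm to hit. Neither side of the sandwich gives a sound reduction for the hitting set problem. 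There are secondary issues as well: fan-triangulation from the centroid requires $M$ to be star-shaped, and even for convex $M$ adjacent vertices are not at distance $\Theta(1/k)$ in general (take a convex $k$-gon with one long edge and many short ones), so the triangle aspect ratios are not $\Theta(k)$; and pulling back axis-aligned rectangles through $\phi_j^{-1}$ yields parallelograms, not axis-aligned rectangles, so Theorem~\ref{thm:main} does not apply in the original coordinates (while if you stay in $\phi_j$-coordinates, the stretch factor of $\phi_j$ is irrelevant to the aspect ratio there).

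The paper avoids all of this by decomposing into \emph{parallelograms} rather than axis-aligned rectangles: any triangle is the union of three parallelograms cut out by its medians, so an arbitrary $k$-vertex polygon is a union of $O(k)$ parallelograms. Each parallelogram is affinely equivalent to a unit square, so Theorem~\ref{thm:main} with $\varrho=1$ gives an $O(\log n)$-competitive sub-algorithm per parallelogram, with no factor of $k$. Combining $O(k)$ such sub-algorithms via exactly your charging argument then yields the $O(k^2\log n)$ bound.
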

%First, we notice that the online algorithm for homothets of a square remains $O(\log n)$-competitive under affine transformations (\Cref{cor:parallelogram}). Then, we show that any polygon with $k \geq 3$ vertices is the union of at most $5k - 12$ parallelograms (\Cref{lem:union}). By combining these two ingredients (affine transformations and union of parallelograms) and applying a carefully designed algorithmic framework (described in~\Cref{sec:generalizations}), we obtain an $O(k^2 \log n)$-competitive {deterministic} algorithm for the online hitting set problem for $n$ points and a sequence of positive homothets of a polygon with $k\geq 3$ vertices (\Cref{cor:generalization}).

The previous best competitive ratio for these problems was $O(\log^2 n)$, by Alon et al.~\cite{AlonAABN09}, which holds more generally for any collection of sets of polynomial size, including any collection of geometric objects of bounded VC-dimension.
%(e.g., triangles, or polygons with at most $k$ vertices for constant $k$, disks, ellipses, or any other families of semi-algebraic sets of bounded description complexity). 
Even and Smorodinsky~\cite{EvenS14} asked whether there is an online hitting set algorithm with $O(\log n)$ competitive ratio for any set system on $n$ points with bounded VC-dimension.
Khan et al.~\cite{KhanLRSW23} asked whether there is an online algorithm for hitting set for squares with a competitive ratio of $O(\log n)$, 
as their algorithm is restricted to integer points in $[0, N)^2\cap \mathbb{Z}^2$ and
 it is $O(\log N)$-competitive even if $|P|\ll N$.
%{which would improve upon their $O(\log N)$-competitive algorithm for integer points contained in $[0, N)^2$.} 
Our result (\Cref{thm:main}) gives an affirmative answer to their question.
It is unclear whether $O(\log n)$-competitive online algorithms are possible for any other families of geometric set systems. We briefly discuss roadblocks to possible generalizations to disks (of arbitrary radii) in 2D or to cubes in 3D in~\Cref{sec:conclusion}. 
%the existing results and the results of this paper.

\smallskip\noindent\emph{Technical highlights.} The key technical tool for an $O(\log n)$-competitive online algorithm for squares (and axis-aligned rectangles of bounded aspect ratio) is the classical BBD tree data structure by Arya et al.~\cite{AryaM00}, which is computed for the given set $P$ of $n$ points in the plane. It is a hierarchical space partition of depth $O(\log n)$, where all sets are ``fat'' (in the sense defined below). When axis-aligned rectangles arrive one by one in an online fashion, we choose hitting points in a top-down traversal of the BBD tree.  For each point $p\in \opt$ of an (unknown) minimum hitting set $\opt$, the BBD tree is ``saturated'' after $O(\log n)$ levels: we prove that our algorithm uses only $O(\log n)$ points for any sequence of rectangles that can be optimally hit by a single point. See \Cref{sec:alg} for further details.

%   \paragraph{
\smallskip
\noindent\emph{Organization.} \Cref{sec:pre}, begins by introducing necessary definitions and then reviews a classical space partition data structure, the Balanced Box Decomposition Tree (BBD tree)~\cite{AryaMNSW98}. \Cref{sect_prop} presents several key properties of BBD trees. \Cref{sec:alg} describes our online algorithm for hitting axis-aligned rectangles, and analyzes its competitive ratio. 
Then, \Cref{sec:generalizations}, generalizes the main result from axis-aligned squares to positive homothets of an arbitrary polygon.  Finally, \Cref{sec:conclusion} concludes with a discussion of future research directions.

\section{Notation and Preliminaries}
\label{sec:pre}

Unless stated otherwise, the term \emph{object} refers to a compact set in $\IR^d$ with a nonempty interior. Let $\sigma$ denote such an object.
For a scaling parameter $\lambda\in \mathbb{R}$ and a translation vector $b\in\IR^d$, the set $\lambda \sigma+b=\{\lambda x+b : x\in \sigma\}$ is called a \emph{homothet} or \emph{homothetic copy} of $\sigma$; and it is a \emph{positive homothet} if $\lambda>0$. %A collection $\cal S$ of objects is called \emph{homothetic} if, for each pair $\sigma, \sigma' \in \cal{S}$, $\sigma$ is homothetic to $\sigma'$.}

\medskip\noindent
\emph{BBD Trees.} 
% \label{ssec:BBD}
Arya et al.~\cite{AryaMNSW98} introduced the \emph{Balanced Box Decomposition Tree} (\emph{BBD tree}, for short), which is a binary space partition tree for a set of $n$ points in $\mathbb{R}^d$. Since its introduction in the 1990s, BBD trees have become a widely used data structure for processing and classifying spatial data in computational geometry and beyond. 
%, including, among others, approximate nearest neighbor searching~\cite{AryaMNSW98}, approximate range queries~\cite{AryaM00}. 
In contrast to the quadtree (or compressed quadtree), the depth of the BBD tree is $O(\log n)$ and the nodes correspond to ``fat'' regions; the precise definition is below. 

\begin{figure}[ht]
    \centering
    \includegraphics[scale=1]{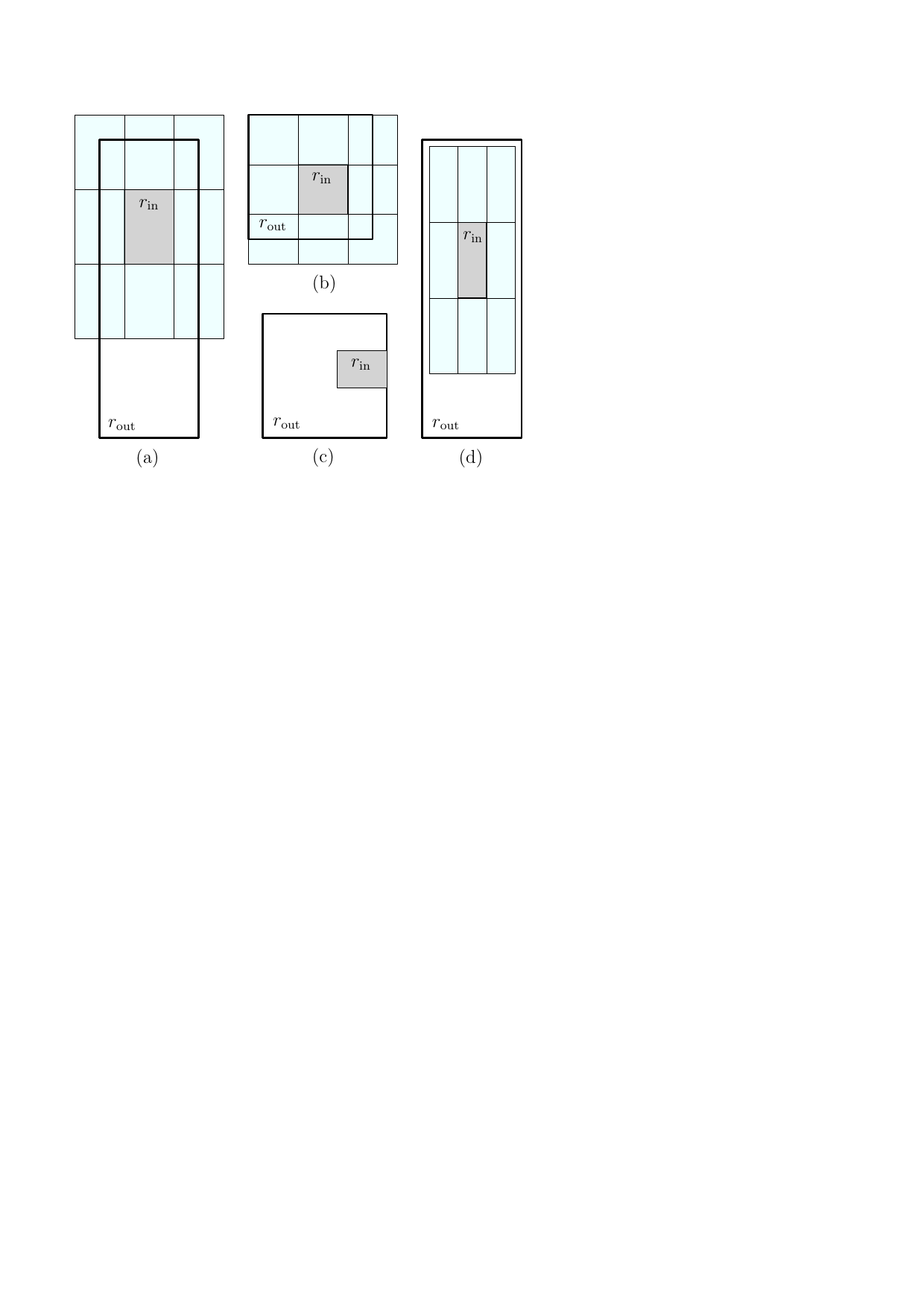}
    \caption{The rectangle $r_{\rm in}$ is not sticky for the rectangle $r_{\rm out}$ in (a) and (b), and sticky in (c) and (d).}
    \label{fig_sticky}
    \vspace{-.35 cm}
\end{figure}

For a set $P$ of $n$ points in an axis-aligned square (bounding box), the BBD tree is a binary tree $T$, where the nodes correspond to regions, called \emph{cells} (the root corresponds to the bounding box). The parent-child relation corresponds to containment between the corresponding cells with the following properties:
\begin{itemize}
     \item Each node $v\in V(T)$ corresponds to a \emph{cell} $C_v=r_{\rm out}(v)\setminus r_{\rm in}(v)$, where $r_{\rm in}(v)$ and $r_{\rm out}(v)$ are axis-aligned rectangles such that $r_{\rm in}(v)\subset r_{\rm out}(v)$, and possibly $r_{\rm in}(v)=\emptyset$.
    \item the aspect ratio of $r_{\rm out}(v)$ and $r_{\rm in}(v)$ (if $r_{\rm in}\neq \emptyset$) is at most 3;
    \item if $r_{\rm in}(v)\neq \emptyset$, then it is \emph{sticky}, which means that the vertical (respectively, horizontal) distance between $r_{\rm in}(v)$ and the boundary of $r_{\rm out}(v)$ is either 0 or at least the side length of $r_{\rm in}(v)$. An equivalent condition for stickiness can be obtained by considering the regular grid consisting of $3^2$ translated copies of $r_{\rm in}(v)$, centered around $r_{\rm in}(v)$. The rectangle $r_{\rm in}(v)$ is sticky for $r_{\rm out}(v)$ if and only if every copy of $r_{\rm in}(v)$ in this grid either lies entirely within $r_{\rm out}(v)$ or is disjoint from the interior of $r_{\rm out}(v)$, see Figure~\ref{fig_sticky};
    \item the cells $\{C_v: v\in V(T)\}$ form a laminar set system, that is, if $u$ is a descendant of $v$, then $C_u\subset C_v$, otherwise ${\rm int}(C_u)\cap {\rm int}(C_v)=\emptyset$;
    \item each leaf node $v\in V(T)$, the region $C_v$ contains at most one point in $P$.
\end{itemize} 
Each internal node has exactly two children, generated by one of two operations: a \emph{fair split}, which decomposes a cell $C_v$ along an axis-parallel line into two cells, or a \emph{shrink}, which introduces a new box $R$ such that $r_{\rm in}(v)\subseteq R\subseteq r_{\rm out}(v)$ and decomposes $C_v=r_{\rm out}(v)\setminus r_{\rm in}(v)$ into $r_{\rm out}(v)\setminus R$ and $R\setminus r_{\rm in}(v)$. 
Furthermore, the number of nodes in $T$ is $O(n)$, the depth of $T$ is $O(\log n)$, and the entire structure can be constructed in $O(n\log n)$ time~\cite{AryaMNSW98}.

% \begin{figure}
%     \centering
%     \includegraphics[page=5,scale=.8]{Figures.pdf}
%     \caption{Caption}
%     \label{fig_child}
% \end{figure}

% {For a set of $n$ points in the plane, the BBD tree consists of $O(n)$ nodes (including both internal nodes and leaves), has
% height $O(\log n)$ and can be constructed in $O(n \log n)$ time~\cite{AryaMNSW98}.}

% \section{Key Properties of BBD trees}
\section{A Rectangle amid Cells of the BBD Tree}
\label{sect_prop}
In this section, we present some important properties of BBD trees and discuss the relative position of an arbitrary axis-aligned rectangle of bounded aspect ratio with respect to the cells of the BBD tree. These properties play a crucial role in the design and analysis of our algorithms in \Cref{sec:alg,sec:generalizations}.

\subsection{Crossing Between Rectangles and Cells of the BBD Tree}
Let $T$ be a BBD tree for a finite point set $P$. We say that an axis-aligned rectangle $R$ \emph{crosses} a cell $C_v$, $v\in V(T)$, if $C_v\cap R\neq \emptyset$ but $C_v$ does not contain any vertex of $R$ and $R$ does not contain any vertex of $C_v$ (i.e., vertices of $r_{\rm out}(v)$ and vertices of $r_{\rm in}(v)$ if any). See~\Cref{fig_crossing} for examples.
% (for $C_v$ a rectangle or a rectangle with a hole). 

\begin{figure}[htbp]
    \centering
    \includegraphics[page=2,scale=1]{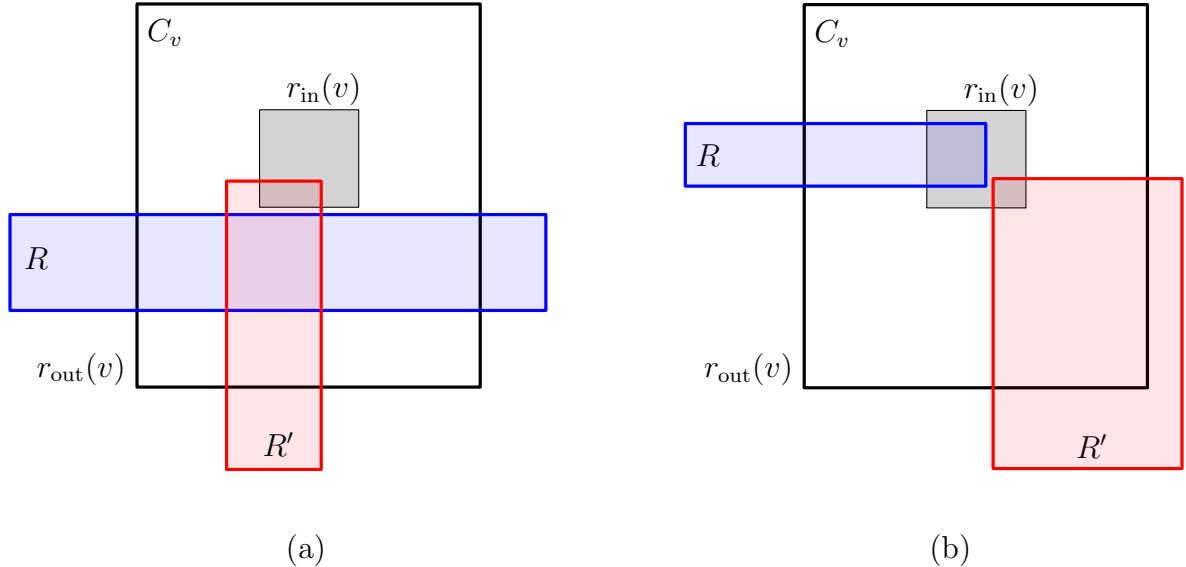}
    \caption{In both (a) and (b), the rectangle $R$ crosses the cell $C_v$, but $R'$ does not.}
    \label{fig_crossing}
    \vspace{-.35 cm}
\end{figure}

\begin{lemma}\label{lem:rho}
    If $R$ is an axis-aligned rectangle of aspect ratio at most $\varrho$, for some $\varrho\geq 1$, then it crosses $O(\varrho)$ interior-disjoint cells of a BBD tree. 
\end{lemma}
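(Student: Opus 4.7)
The plan is to bound the crossings by a case analysis on the relative position of $R$ and the cell, followed by a simple area-packing argument. Write $R=[x_1,x_2]\times[y_1,y_2]$ with width $w$ and height $h$, and assume without loss of generality that $w\ge h$, so $w/h\le\varrho$. For a cell $C_v$, let $w_{\rm out},h_{\rm out}$ and $w_{\rm in},h_{\rm in}$ denote the widths and heights of $r_{\rm out}(v)$ and $r_{\rm in}(v)$. The crossing condition forces each vertex of $R$ to lie either outside $r_{\rm out}(v)$ or inside $r_{\rm in}(v)$ (since none is in $C_v$), and no vertex of $r_{\rm out}(v)$ or $r_{\rm in}(v)$ to lie in $R$. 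Based on the relative position of $R$ and $r_{\rm out}(v)$, I split each crossed cell into one of two types: either (Type~1) $R$ and $r_{\rm out}(v)$ cross with neither containing a vertex of the other, or (Type~2) $r_{\rm out}(v)$ contains at least one vertex of $R$, which then must lie in $r_{\rm in}(v)$.

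For Type~1, say $R$ spans $r_{\rm out}(v)$ as a horizontal strip; the vertical case is symmetric. I would consider the two possibilities for $R$'s $y$-extent versus $r_{\rm in}(v)$'s $y$-extent: either they are disjoint, or $r_{\rm in}(v)$'s $y$-extent strictly contains $R$'s. Any other partial overlap would force a vertex of $r_{\rm in}(v)$ into $R$, contradicting the crossing condition. In the disjoint case, $R\cap C_v=R\cap r_{\rm out}(v)$ is a rectangle of area $w_{\rm out}\cdot h\ge (h_{\rm out}/3)\cdot h\ge h^2/3$, using the aspect ratio $\le 3$ of $r_{\rm out}(v)$ and $h_{\rm out}\ge h$. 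In the containment case, $R\cap C_v$ is the pair of ``side frames'' of $r_{\rm out}(v)$ outside $r_{\rm in}(v)$, and stickiness guarantees that the total width of these frames is at least $w_{\rm in}\ge h_{\rm in}/3\ge h/3$, again giving area $\ge h^2/3$.

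For Type~2, axis-aligned convexity rules out most vertex configurations: if $r_{\rm in}(v)$ contained three or more vertices of $R$, or two opposite (diagonal) ones, then $R\subseteq r_{\rm in}(v)$, contradicting $R\cap C_v\ne\emptyset$. The delicate step is ruling out the case where $r_{\rm in}(v)$ contains exactly one vertex of $R$: a short coordinate chase shows, e.g., that if $(x_2,y_1)\in r_{\rm in}(v)$ and the other three vertices of $R$ lie outside $r_{\rm out}(v)$, then the top-left corner of $r_{\rm out}(v)$ itself must lie in $R$, violating the crossing condition. Hence $r_{\rm in}(v)$ contains exactly two adjacent vertices of $R$, so a full side of $R$ lies in $r_{\rm in}(v)$. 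In each of the four such orientations, $R\cap C_v$ is a single rectangle whose area is again $\ge \frac{1}{3}\min(w,h)^2 = h^2/3$, via stickiness and the aspect-ratio bounds on $r_{\rm in}(v)$ and $r_{\rm out}(v)$.

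Summing over cells, every crossed cell contributes an interior-disjoint region of area $\ge h^2/3$ inside $R$, and $\mathrm{area}(R)=wh$, so the number of crossings is at most $O(wh/h^2)=O(w/h)=O(\varrho)$. The main obstacle I expect is the Type~2 analysis, particularly ruling out the one-vertex configuration and verifying the area lower bound in each of the four two-adjacent-vertex configurations, since these rely on carefully combining the stickiness property, the bounded aspect ratios of $r_{\rm in}(v)$ and $r_{\rm out}(v)$, and the precise form of the crossing condition.
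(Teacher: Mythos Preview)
Your argument is correct. Both your proof and the paper's use the same raw ingredients---the aspect-ratio bound on $r_{\rm out}(v)$ and $r_{\rm in}(v)$, the stickiness of $r_{\rm in}(v)$, and the crossing condition---but they organize the count differently. The paper splits according to which pair of parallel edges of $r_{\rm out}(v)$ (or of $r_{\rm out}(v)$ and $r_{\rm in}(v)$) the rectangle $R$ meets, derives a one-dimensional inequality of the form $\width(R)\le 3\varrho\cdot(\text{width contributed by }C_v)$, and then applies a pigeonhole on that one coordinate; this requires a small extra remark to prevent double-counting between horizontal and vertical crossings. You instead prove a single uniform lower bound $\mathrm{area}(R\cap C_v)\ge \tfrac{1}{3}\min(w,h)^2$ in every configuration and finish with a two-dimensional packing inside $R$, which is slightly cleaner since interior-disjointness of the cells immediately gives interior-disjointness of the pieces $R\cap C_v$ and no separate double-counting check is needed. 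One small remark: your ``the vertical case is symmetric'' in Type~1 is not literally symmetric once you have fixed $w\ge h$, but the computation goes through with $w$ in place of $h$ (yielding area $\ge w^2/3\ge h^2/3$), so the conclusion stands.
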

\begin{proof}
Let $T$ be a BBD tree, and let $U\subset V(T)$ be a set of nodes that corresponds to a family of interior-disjoint cells crossed by $R$. Let $v\in U$.
Recall that a cell of the BBD tree is defined as $C_v=r_{\rm out}(v)\setminus r_{\rm in}(v)$, where $r_{\rm in}(v)$ may be empty. Depending on how $R$ intersects with $C_v$, we distinguish between two cases.

\smallskip
\noindent \textbf{Case~1: $R$ crosses $C_v=r_{\rm out}(v)\setminus r_{\rm in}(v)$ and $R$ intersects two opposite edges of $r_{\rm out}(v)$.} Assume w.l.o.g.\ that $R$ intersects two vertical edges of $r_{\rm out}(v)$ (i.e., $R$ crosses $r_{\rm out}(v)$ vertically); see~\Cref{fig_crossing}(a). Then we have 
\[
    \width(R)\leq \varrho \cdot \height(R)
    <\varrho\cdot \height(r_{\rm out}(v)) 
    \leq 3\varrho\cdot \width(r_{\rm out}(v)).
\]

In particular, this implies $\width(R)< 3\varrho\cdot \width(r_{\rm out}(v))$. Note also that $\width(R\cap r_{\rm out}(v))=\width(r_{\rm out}(v))$. By the pigeonhole principle, $R$ crosses horizontally at most $3\varrho$ interior-disjoint cells. Similarly, $R$ may cross at most $3\varrho$ cells vertically. However, if $R$ crosses a cell $C_v$ horizontally and another cell $C_w$ vertically, then ${\rm int}(C_v)\cap {\rm int}(C_w)\neq \emptyset$. In particular, $U$ cannot contain both $v$ and $w$. Overall, $R$ crosses at most $3\varrho$ interior-disjoint cells in this case.

\smallskip\noindent \textbf{Case~2: $R$ crosses $C_v=r_{\rm out}(v)\setminus r_{\rm in}(v)$ and $R$ intersects a pair of parallel edges in $r_{\rm out}(v)$ and $r_{\rm in}(v)$, respectively.} 
Assume w.l.o.g.\ that $R$ intersects the left side of both $r_{\rm in}(v)$ and $r_{\rm out}(v)$; see~\Cref{fig_crossing}(b). Let $\mathrm{dist}_{\rm left}(v)$ denote the distance between the left sides of $r_{\rm in}$ and $r_{\rm out}$. Then 
we have $\width(R\cap C_v)={\rm dist}_{\rm left}(v)$, and in particular ${\rm dist}_{\rm left}(v)<\width(R)$. Due to the stickiness, we also have $\width(r_{\rm in})\leq  {\rm dist}_{\rm left}(v)$. Overall, we obtain 
\[
\width(R)\leq \varrho\cdot \height(R)
<\varrho \cdot \height(r_{\rm in}(v)) 
\leq 3\varrho \cdot \width(r_{\rm in}(v))
\leq 3\varrho \cdot {\rm dist}_{\rm left}(v).
%\height(R)\leq \varrho \cdot \width(R)< \varrho \cdot \width(r_{\rm in})\leq \frac{\varrho}{2} \cdot\width(r_{\rm out})\leq \varrho \cdot\width(r_{\rm out})
\]
By the pigeonhole principle, $R$ crosses at most $3\varrho$ interior-disjoint cells between the left sides of $r_{\rm in}$ and $r_{\rm out}$. Similarly, $R$ crosses at most $3\varrho$ interior-disjoint cells between the right (respectively, top, bottom) sides of $r_{\rm in}$ and $r_{\rm out}$. As a result, $R$ crosses $O(\varrho)$ interior-disjoint cells in this case.
\end{proof}

\subsection{Extremal Points and their Properties}
\label{ssec:ext}
For each node $v$ of the BBD tree, we define a set ${\rm Ext}_v$ of a constant number of \emph{extremal points} in $P$. 

\begin{itemize}
    \item For an axis-aligned rectangle $r$, let ${\rm ext}(r)$ be a subset of $P\cap r$ that consists of a point with the minimum $x$-coordinate, maximum $x$-coordinate, minimum $y$-coordinate, and maximum $y$-coordinate (ties are broken arbitrarily).
    \item If $r_{\rm in}(v)=\emptyset$ (that is, $C_v  = r_{\rm out}(v)$), then let ${\rm Ext}_v={\rm ext}(r_{\rm out}(v))$; see~\Cref{fig_extreme}(a). 
    \item If $r_{\rm in}(v)\neq \emptyset$, then we subdivide  $C_v=r_{\rm out}(v)\setminus r_{\rm in}(v)$ along the lines spanned by the four sides of $r_{\rm in}(v)$ into $k$, $2\leq k\leq 9$, rectangular regions $C_v=\bigcup_{i=1}^k r_i$ and let ${\rm Ext}_v=\bigcup_{i=1}^k {\rm ext}(r_i)$; see~\Cref{fig_extreme}(b).
\end{itemize}
\begin{figure}[ht]
    \centering
     \vspace{-.4 cm}
    \includegraphics[page=3, scale=1]{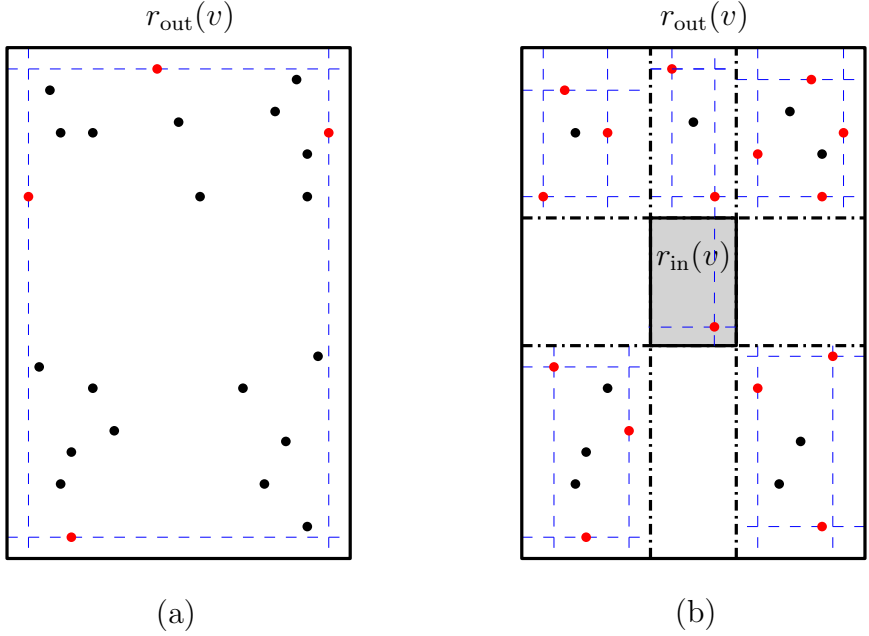}
    \caption{The extremal points of the cell $C_v$, i.e., ${\rm Ext}_v$ are colored red, when (a) $r_{\rm in}(v)=\emptyset$; (b) $r_{\rm in}(v)\neq\emptyset$.}
    \label{fig_extreme}
    % \vspace{-.35 cm}
\end{figure}

\vspace{-\baselineskip}

\smallskip\noindent \emph{Properties of extremal points.}
We state a few properties of extremal points that will be used in the competitive analysis of our online algorithm in \Cref{sec:alg}.
Let $P$ be a finite set of points in a bounding box (a square), and $T$ be a BBD tree for $P$. 
% {Note that for any point $p\in \IR^2$, we use $p(x)$ and $p(y)$ to denote the $x$- and $y$-coordinate of $p$.}
    \begin{figure}[ht]
        \centering
        \includegraphics[page=7,scale=1.2]{Figures.pdf}
        \caption{Illustration for (a)~\Cref{lem_line}; (b)~\Cref{lem:halfplane}.}
        \label{fig_properties}
        \vspace{-.2 cm}
    \end{figure}
    
\begin{lemma}\label{lem_line}
    Let $r$ be an axis-aligned rectangle, and $L^-$ be a half-plane bounded by an axis-parallel line $L$. If $P\cap r\cap L^-\neq \emptyset$, then ${\rm ext}(r)\cap L^-\neq \emptyset$.
\end{lemma}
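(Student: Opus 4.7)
\medskip
\noindent\textbf{Proof proposal.} The plan is a short case analysis on the orientation of the line $L$. Since $L$ is axis-parallel, there are four possible shapes for the half-plane $L^-$: $\{y\le c\}$, $\{y\ge c\}$, $\{x\le c\}$, or $\{x\ge c\}$ for some constant $c$. By symmetry between the $x$ and $y$ axes, and between the two sides of $L$, it suffices to handle one representative case, say $L^-=\{y\le c\}$; the other three cases will follow by the obvious relabeling.

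So fix $L^- = \{y\le c\}$ and assume by hypothesis that there exists some witness point $p\in P\cap r\cap L^-$, i.e.\ $p\in P\cap r$ with $p_y\le c$. Recall that $\mathrm{ext}(r)\subseteq P\cap r$ is defined to contain a point achieving the minimum $y$-coordinate over $P\cap r$ (among other extrema). Let $q$ be this minimum-$y$ extremal point; it exists precisely because $P\cap r$ is nonempty (it contains $p$). Then by the defining property of $q$ and the membership of $p$,
\[
q_y \;\le\; p_y \;\le\; c,
\]
so $q\in L^-$. Since $q\in \mathrm{ext}(r)$ by construction, we conclude $\mathrm{ext}(r)\cap L^-\neq\emptyset$, as desired. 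The symmetric cases use the maximum-$y$, minimum-$x$, or maximum-$x$ extremal point in $\mathrm{ext}(r)$, respectively.

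I do not expect any real obstacle: the lemma is essentially a restatement of the definition of $\mathrm{ext}(r)$. The only mild care needed is confirming that $\mathrm{ext}(r)$ actually contains witnesses of all four axis-aligned extrema (which it does by definition), and handling the trivial possibility that $p$ itself already achieves the extremum (in which case we may simply take $q=p$). No property of BBD trees is needed for this lemma; it is a purely combinatorial statement about extremal points in an axis-aligned rectangle.
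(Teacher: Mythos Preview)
Your proposal is correct and follows essentially the same argument as the paper: fix one of the four half-plane orientations by symmetry, take the corresponding extremal point of $P\cap r$ (the paper uses the minimum-$x$ point for $L^-=\{x\le a\}$, you use the minimum-$y$ point for $L^-=\{y\le c\}$), and observe that its coordinate is at most that of the witness $p$, hence it lies in $L^-$.
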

\begin{proof}
Assume w.l.o.g. that $L$ is the vertical line $x = a$, for some $a\in \mathbb{R}$, and let $L^- = \{ (x, y) \in \mathbb{R}^2 : x \leq a \}$ be the left half-plane; see~\Cref{fig_properties}(a).  Since $P \cap r \cap L^- \neq \emptyset$, there exists a point $p\in P$ in the region $r\cap L^-$.
Note that all the points in $P \cap r \cap L^-$ lie within a region $r\cap L^-$ bounded on the left, top, and bottom by the corresponding edges of the rectangle $r$, and on the right by the line $L$.
Let  $p_{\min}$ be a point with the minimum $x$-coordinate in $P\cap r$.
Since $p \in P \cap r \cap L^-$, and $p_{\min}$ has $x$-coordinate no greater than $p$, we have $x(p_{\min}) \leq x(p) \leq a$, which implies that $p_{\min} \in L^-$.
Thus, we have $p_{\min} \in {\rm ext}(r) \cap L^-$, as required.
\end{proof}

% The proofs of lemmas marked with ($\star$) are available in \Cref{appendix}.
\begin{lemma}\label{lem:halfplane}
    Let $C_v=r_{\rm out}(v)\setminus r_{\rm in}(v)$ be a cell (where $r_{\rm in}(v)$ may be empty), and $L^-$ be a half-plane bounded by an axis-parallel line $L$. 
    If $P\cap C_v\cap L^-\neq \emptyset$, then ${\rm Ext}_v\cap L^-\neq \emptyset$; see~\Cref{fig_properties}(b) for an illustration.
\end{lemma}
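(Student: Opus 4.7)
The plan is to reduce the statement to \Cref{lem_line} by localizing a witness point to a single rectangular piece of the subdivision used to define ${\rm Ext}_v$.

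First, I would handle the degenerate case $r_{\rm in}(v)=\emptyset$. Here $C_v=r_{\rm out}(v)$ is itself an axis-aligned rectangle, and by definition ${\rm Ext}_v={\rm ext}(r_{\rm out}(v))$. Thus $P\cap C_v\cap L^-\neq\emptyset$ becomes $P\cap r_{\rm out}(v)\cap L^-\neq\emptyset$, and \Cref{lem_line} applied to $r=r_{\rm out}(v)$ immediately gives a point of ${\rm ext}(r_{\rm out}(v))={\rm Ext}_v$ in $L^-$.

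Second, I would handle the main case $r_{\rm in}(v)\neq\emptyset$. Recall that the four lines spanned by the sides of $r_{\rm in}(v)$ subdivide $C_v=r_{\rm out}(v)\setminus r_{\rm in}(v)$ into at most nine axis-aligned rectangular regions $r_1,\dots,r_k$, and that ${\rm Ext}_v=\bigcup_{i=1}^k{\rm ext}(r_i)$. Pick any witness point $p\in P\cap C_v\cap L^-$. Since $C_v=\bigcup_{i=1}^k r_i$, there is some index $j$ with $p\in r_j$, so $P\cap r_j\cap L^-\neq\emptyset$. Applying \Cref{lem_line} to the axis-aligned rectangle $r_j$ produces a point $q\in{\rm ext}(r_j)\cap L^-$. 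Because ${\rm ext}(r_j)\subseteq{\rm Ext}_v$ by definition, this $q$ also lies in ${\rm Ext}_v\cap L^-$, as required.

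I do not expect any serious obstacle here: the lemma is essentially a bookkeeping consequence of \Cref{lem_line} combined with the fact that the bounded-size subdivision of $C_v$ into axis-aligned rectangles is precisely what makes ${\rm Ext}_v$ ``rich enough'' to witness an extremal point in any axis-parallel half-plane. The only thing to verify carefully is that the pieces $r_i$ are honest axis-aligned rectangles (so that \Cref{lem_line} applies) and that their extremal sets are all contained in ${\rm Ext}_v$; both are immediate from the construction of ${\rm Ext}_v$ in \Cref{ssec:ext}.
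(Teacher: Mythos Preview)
Your proposal is correct and follows essentially the same argument as the paper: split into the cases $r_{\rm in}(v)=\emptyset$ and $r_{\rm in}(v)\neq\emptyset$, localize a witness point $p$ to one subrectangle of $C_v$, and apply \Cref{lem_line} to that subrectangle. There is no meaningful difference between your write-up and the paper's proof.
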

\begin{proof}
   If $r_{\rm in}(v) = \emptyset$, then cell $C_v = r_{\rm out}$, and so $C_v$ is a rectangle. In this case, ${\rm Ext}_v={\rm ext}(r_{\rm out})$ and~\Cref{lem_line} completes the proof.
    Now consider the case that $r_{\rm in}(v) \neq \emptyset$. Assume w.l.o.g.\ that $L$ is the vertical line $x = a$, for some $a\in \IR$, and let $L^- = \{ (x, y) \in \mathbb{R}^2 : x \leq a \}$ is the left half-plane; see~\Cref{fig_properties}(b). 
    Since $P \cap C_v \cap L^- \neq \emptyset$, there exists a point $p \in P\cap C_v \cap L^-$. 
    Recall that in the construction of ${\rm Ext}_v$, we subdivide $C_v=r_{\rm out}(v)\setminus r_{\rm in}(v)$, by extending the lines defined by the four sides of $r_{\rm in}(v)$, into up to eight subrectangles. 
    Assume that $p\in r_{x}$, where $r_x$ is one of the subrectangles of $C_v$. 
    Then $P\cap r_x\cap L^-\neq \emptyset$, and \Cref{lem_line} yields 
    ${\rm ext}(r_x)\cap L^-\neq \emptyset$. Since ${\rm ext}(r_x)\subset {\rm Ext}_v$ by definition, then ${\rm ext}(r_x)\cap L^-\neq \emptyset$, as claimed. 
\end{proof}

\begin{lemma}\label{lem:strip}
    Let $C_v=r_{\rm out}(v)\setminus r_{\rm in}(v)$ be a cell, and $S=L_1^-\cap L_2^+$ denotes the intersection of two half-planes bounded by two horizontal lines, or two vertical lines $L_1$ and $L_2$. 
    If $P\cap C_v\cap S\neq \emptyset$ and $S$ contains a corner of $r_{\rm in}(v)$, then ${\rm Ext}_v\cap S\neq \emptyset$; see~\Cref{fig_lem_strip}  for an illustration.
\end{lemma}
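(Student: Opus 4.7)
The plan is to extend the half-plane argument of \Cref{lem:halfplane} to a two-sided strip, exploiting the fact that a corner of $r_{\rm in}(v)$ is pinned inside $S$. Without loss of generality, take $S = \{(x,y) : l_2 \le x \le l_1\}$ to be a vertical strip, with $L_1^- = \{x \le l_1\}$ and $L_2^+ = \{x \ge l_2\}$; the horizontal case is analogous by swapping the roles of $x$ and $y$. Write $r_{\rm in}(v) = [a_1, a_2] \times [b_1, b_2]$ and $r_{\rm out}(v) = [A_1, A_2] \times [B_1, B_2]$, so the corner of $r_{\rm in}(v)$ inside $S$ has $x$-coordinate $a^* \in \{a_1, a_2\}$ with $l_2 \le a^* \le l_1$. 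Pick any witness $p \in P \cap C_v \cap S$; it belongs to exactly one subrectangle $r_i$ of the subdivision of $C_v$, whose $x$-range $[c,d]$ is one of $[A_1, a_1]$, $[a_1, a_2]$, or $[a_2, A_2]$.

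The crux is the claim that $[c,d]$ never strictly contains $[l_2, l_1]$. I would verify this by inspecting all three possibilities for $[c,d]$ in the two subcases $a^* = a_1$ and $a^* = a_2$: each $[c,d]$ has an endpoint equal to $a_1$ or $a_2$, and $l_2 \le a^* \le l_1$ combined with $a_1 \le a_2$ forces either $d \le l_1$ or $c \ge l_2$ in every case. With this in hand, the proof splits into three subcases. (i) If $c \ge l_2$ and $d \le l_1$, then $r_i \subseteq S$, so any extremal point of $r_i$ (non-empty because $p \in r_i$) witnesses the claim. (ii) If $c < l_2$ and $d \le l_1$, then $r_i \subseteq L_1^-$, while $p \in r_i \cap L_2^+$, so \Cref{lem_line} applied to $r_i$ and the half-plane $L_2^+$ yields a point in ${\rm ext}(r_i) \cap L_2^+ \subseteq L_1^- \cap L_2^+ = S$. (iii) The case $c \ge l_2$, $d > l_1$ is symmetric, applying \Cref{lem_line} with $L_1^-$. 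Since ${\rm ext}(r_i) \subseteq {\rm Ext}_v$, the lemma follows.

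The main obstacle is the case-analysis step ensuring that $[c,d]$ never strictly contains $[l_2, l_1]$; this is exactly where the hypothesis that $S$ contains a corner of $r_{\rm in}(v)$ gets consumed. It pins the dividing line $x = a^*$ inside $S$, so every subrectangle of the subdivision is contained in at least one of the closed half-planes $L_1^-$ or $L_2^+$, after which a single application of \Cref{lem_line} recovers the boundary constraint on the unused side.
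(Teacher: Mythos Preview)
Your proof is correct and follows essentially the same route as the paper's. Both arguments exploit the hypothesis that a corner of $r_{\rm in}(v)$ lies in $S$ to conclude that one of the vertical subdivision lines $x=a_1$ or $x=a_2$ falls inside the strip, so every subrectangle of $C_v$ (in particular the one containing the witness $p$) lies in at least one of the two bounding half-planes $L_1^-$ or $L_2^+$; then a single invocation of \Cref{lem_line} handles the remaining side. Your version makes the three-way case split on the $x$-range $[c,d]$ explicit, whereas the paper compresses this into the single line ``every subrectangle of $C_v$ intersects at most one of $L_1$ and $L_2$,'' but the content is the same.
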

 \begin{figure}[ht]
        \centering
        \includegraphics[page=8,scale=1.2]{Figures.pdf}
        \caption{Illustration for~\Cref{lem:strip}.}
        \label{fig_lem_strip}
        \vspace{-.35 cm}
    \end{figure}

\begin{proof}
     Assume w.l.o.g.\ that  $L_1$ and $L_2$ are the vertical lines $x = a$ and $x=b$, respectively, for some $a,b\in \mathbb{R}$. Let $L_1^- = \{ (x, y) \in \mathbb{R}^2 : x \leq a \}$ be the left half-plane. If $L_2^+$ is also a left half-plane, then $L_1^-\cap L_2^+$ is a half-plane and \Cref{lem:halfplane} completes the proof. So we may assume that $L_2^+ = \{ (x, y) \in \mathbb{R}^2 : x \geq b \}$ is the right half-plane,
     and $b<a$ (since $S=L_1^-\cap L_2^+\neq \emptyset$); see~\Cref{fig_lem_strip}. 
     
     Assume that $S$ contains corner $c=(c_1,c_2)$ of $r_{\rm in}(v)$. Recall that $C_v$ is subdivided along the lines spanned by the sides of $r_{\rm in}(v)$. One of the subdivision lines is $x=c_1$, where $a<c_1<b$ (since $c\in S$). Consequently, every subrectangle of $C_v$ intersects at most one of $L_1$ and $L_2$. 

     Since $P\cap C_v\cap S\neq \emptyset$, then there is a point $p\in P\cap C_v\cap S$. Assume that $p\in r_x$, where $r_x$ is one of the subrectangles of $C_v$.
     Since ${\rm ext}(r_x) \cap S \subseteq {\rm Ext}_v \cap S$ and by~\Cref{lem_line}, we have ${\rm ext}(r_x) \cap S\neq \emptyset$, thus we have ${\rm Ext}_v \cap S\neq \emptyset$. Hence, the lemma follows. 
\end{proof}

\begin{lemma}\label{lem:corner}
    Let $C_v=r_{\rm out}(v)\setminus r_{\rm in}(v)$ be a cell, let $S=L_1^-\cap L_2^+$ be the intersection of two half-planes bounded by two perpendicular axis-parallel lines $L_1$ and $L_2$. 
    If $P\cap C_v\cap S\neq \emptyset$ and $L_1\cap L_2\in r_{\rm in}(v)$, then ${\rm Ext}_v\cap S\neq \emptyset$; see~\Cref{fig_lem_corner} for an illustration.
\end{lemma}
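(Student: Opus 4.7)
The plan is to follow the template of~\Cref{lem:strip}: fix coordinates, exploit the subdivision of $C_v$ along the lines spanned by the sides of $r_{\rm in}(v)$, and reduce the quadrant condition to the half-plane condition handled by~\Cref{lem_line}. I would start by writing $r_{\rm in}(v)=[x_1,x_2]\times[y_1,y_2]$ and assuming without loss of generality that $L_1$ is the vertical line $x=a$ and $L_2$ the horizontal line $y=b$, with $L_1^-=\{x\le a\}$ and $L_2^+=\{y\ge b\}$, so that $S$ is the upper-left quadrant with corner $(a,b)$. The hypothesis $(a,b)\in r_{\rm in}(v)$ immediately yields $x_1\le a\le x_2$ and $y_1\le b\le y_2$. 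Recall that the four subdivision lines $x=x_1$, $x=x_2$, $y=y_1$, $y=y_2$ partition $C_v$ into up to eight subrectangles arranged in the $3\times 3$ pattern around $r_{\rm in}(v)$.

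The key geometric observation to establish is that no subrectangle of $C_v$ is cut through its interior by both $L_1$ and $L_2$. Indeed, the vertical line $L_1$ lies in the closed strip $x_1\le x\le x_2$ and can therefore only enter subrectangles whose $x$-range coincides with that of $r_{\rm in}(v)$ (the ``top'' and ``bottom'' ones); symmetrically, the horizontal line $L_2$ can only enter the ``left'' and ``right'' subrectangles; these two classes are disjoint.

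Armed with this, I would pick a witness $p\in P\cap C_v\cap S$, let $r_x$ be a subrectangle containing $p$, and split into three cases. If neither line cuts the interior of $r_x$, then $r_x$ lies on one side of each, and $p\in r_x\cap L_1^-\cap L_2^+$ forces $r_x\subseteq L_1^-\cap L_2^+=S$; any point of ${\rm ext}(r_x)$ (nonempty since $p\in P\cap r_x$) is then in $S$. If only $L_1$ cuts $r_x$, then $r_x\subseteq L_2^+$ (because $L_2$ does not separate the interior of $r_x$ and $p\in L_2^+$), and applying~\Cref{lem_line} to $r_x$ with $L^-=L_1^-$ produces some $q\in{\rm ext}(r_x)\cap L_1^-$, which lies in $r_x\subseteq L_2^+$ and hence in $S$. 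The case where only $L_2$ cuts $r_x$ is symmetric. Since ${\rm ext}(r_x)\subseteq{\rm Ext}_v$ by construction, ${\rm Ext}_v\cap S\neq\emptyset$ in every case. I do not foresee a serious obstacle; the only point warranting a moment of care is the geometric observation above, which is immediate from $(a,b)\in r_{\rm in}(v)$ together with the placement of the subdivision lines, and boundary situations such as $a\in\{x_1,x_2\}$ simply collapse into case~(i) since the corresponding line fails to cut any subrectangle through its interior.
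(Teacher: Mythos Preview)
Your proposal is correct and follows essentially the same approach as the paper: both reduce to the observation that, because $(a,b)\in r_{\rm in}(v)$, no subrectangle of $C_v$ can meet both $L_1$ and $L_2$, so on the subrectangle $r_x$ containing $p$ the quadrant $S$ acts as a single axis-parallel half-plane and~\Cref{lem_line} applies. Your version is slightly more explicit (three cases and the phrasing ``cuts through the interior'') where the paper compresses this into the single line ``$r_x\cap S$ equals $r_x\cap L_1^-$ or $r_x\cap L_2^+$'', but the argument is the same.
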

\begin{figure}[htbp]
        \centering
        \includegraphics[page=9,scale=1.2]{Figures.pdf}
        \caption{Illustration for~\Cref{lem:corner}.}
        \label{fig_lem_corner}
        \vspace{-.2 cm}
    \end{figure}
\begin{proof}
    Assume w.l.o.g.\ that $L_1$ is the vertical line $x = a$, and $L_2$ is the horizontal line $y = b$ for some $a,b\in \mathbb{R}$. Let $L_1^- = \{ (x, y) \in \mathbb{R}^2 : x \leq a \}$ be the left half-plane and $L_2^+ = \{ (x, y) \in \mathbb{R}^2 : y \geq b \}$ be the top half-plane; see~\Cref{fig_lem_corner}. 
    Since $L_1 \cap L_2 = (a, b) \in r_{\rm in}(v)$, 
    then $L_1$ intersects the subrectangles of $C_v$ above and below $r_{\rm in}$(v); and 
    $L_2$ intersects the subrectangles of $C_v$ to the left and right of $r_{\rm in}(v)$.
    In particular, none of the subrectangles in $C_v$ intersects both $L_1$ and $L_2$. 

    Since $P\cap C_v\cap S\neq \emptyset$, then there is a point $p\in P\cap C_v\cap S$. Assume that $p\in r_x$, where $r_x$ is one of the subrectangles of $C_v$. Since $r_x$ intersects at most one of $L_1$ and $L_2$, 
    then $r_x\cap S$ equals $r_x\cap L_1^-$ or $r_x\cap L_2^+$. 
    In both cases, by~\Cref{lem_line}, we have ${\rm ext}(r_x) \cap S\neq \emptyset$.
    Now ${\rm ext}(r_x) \subseteq {\rm Ext}_v$ implies that 
    ${\rm Ext}_v \cap S\neq \emptyset$, as required.
\end{proof}

\section{Online Algorithm and Competitive Analysis}
\label{sec:alg}

In this section, we prove \Cref{thm:main}: We present our online hitting set algorithm for axis-aligned rectangles of aspect ratio $\varrho$, for constant $\varrho\geq 1$, in \Cref{sec_online_alg}, and analyze its competitive ratio in \Cref{ssec:analysis}

\subsection{Online Algorithm}
\label{sec_online_alg}

We now describe our online algorithm. Given a set $P$ of $n$ points in the plane, we compute a BBD tree $T$ for $P$, and the set $H_v$ of extremal points ${\rm Ext}_v\subset P$ for all nodes $v\in V(T)$. As the adversary presents axis-aligned rectangles $\{S_1,\ldots, S_m\}$ of aspect ratio at most $\varrho$, we  maintain the following data structures: 

A hitting set $H_i\subset P$ for the first $i$ rectangles $\{S_j: j\leq i\}$, which is initialized to $H_0=\emptyset$. A set $A_i\subset V(T)$ of \emph{active} nodes of $T$ with the following property: If a node $v\in V(T)$ is active, then all ancestors of $v$ are also active. Initially, all nodes are \emph{inactive} (i.e., $A_0=\emptyset$); inactive nodes can become active, and any active node remains active for the remainder of the algorithm. Furthermore, we maintain the property that $\bigcup_{v\in A_i} {\rm Ext}_v\subset H_i$, i.e., $H_i$ contains the extremal points of all active nodes (however, $H_i$ may contain additional points, as well). 

When a rectangle $S_i$ arrives, initialize $H_i:=H_{i-1}$ and $A_i:=A_{i-1}$. If $S_i\cap H_i\neq \emptyset$, then no further changes are needed in our data structures. Suppose that $S_i\cap H_i = \emptyset$. 
\begin{enumerate}
    \item Consider the four corners of $S_i$. 
     \begin{enumerate}
        \item  For each corner $a\in C_{\rm root}$, find the highest inactive node $v\in V(T)$ such that $a\in C_{v}$: activate $v$ and its sibling $v'$ (set $A_i:=A_i\cup \{v,v'\}$) and add their extremal points to $H_i$ (set $H_i:=H_i\cup {\rm Ext}_{v}\cup {\rm Ext}_{v'}$). 
        \item For each corner $a\notin C_{\rm root}$, if the root of $T$ is not already active, then activate the root and add its extremal points to $H_i$.
            \end{enumerate} 
    \item For every node $u\in V(T)$ where $S_i$ crosses $C_u$, 
    \begin{enumerate}
        \item If $u$ is already active but its children are not, then activate both children of $u$, and add their extremal points to $H_i$.
        \item If $u$ is inactive, then find the highest inactive ancestor $v$ of $u$ (possibly $v=u$): activate $v$ and its sibling $v'$ and add their extremal points to $H_i$.
    \end{enumerate} 
    \item If $S_i\cap H_i= \emptyset$ still holds, then add an arbitrary point in $S_i\cap P$ to $H_i$. 
\end{enumerate}

% \red{Since the root node is activated in the first step, the input square $S_i$ whose 3 corners are outside the root cell is already stabbed.}

\subsection{Competitive Analysis}
\label{ssec:analysis}

The algorithm guarantees that $H_i$ is a hitting set for the first $i$ objects $\{S_1,\ldots , S_i\}$. It is also clear that for each new object $S_i$, we add $O(\varrho)$ new points to $H_i$: Since $S_i$ has four corners and crosses $O(\varrho)$ cells of the BBD tree by \Cref{lem:rho}. 

Let $\opt\subset P$ be an offline optimum, i.e., a minimum hitting set for $\mathcal{C}=\{S_1,\ldots, S_m\}$. For each point $p\in \opt$, let $\mathcal{C}_p=\{S_j\in \mathcal{C}: p\in S_j\}$ be the set of objects hit by $p$. 
It is sufficient to show that for every $p$, the algorithm adds new points to the hitting set in $O(\log n)$ steps to hit objects in $\mathcal{C}_p$. Since $O(\varrho)$ points are added to the hitting set in each step, then the algorithm adds $O(\varrho\log n)$ points in response to the objects in $\mathcal{C}_p$, hence $O(|\opt|\cdot \varrho\log n)$ points in response to all objects in $\mathcal{C}$.  

\begin{lemma}
    If $S_i\in \mathcal{C}_p$ and $S_i\cap H_{i-1}= \emptyset$, then in step $i$, the algorithm activates a cell of the BBD tree containing $p$.
\end{lemma}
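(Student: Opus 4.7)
My plan is to proceed by contradiction: assume that no cell of the BBD tree containing $p$ is activated at step~$i$, and derive that $S_i \cap H_{i-1} \neq \emptyset$, contradicting the hypothesis. Because $p \in P$, the leaf $\ell$ of $T$ with $p \in C_\ell$ satisfies $P \cap C_\ell = \{p\}$, so $p \in {\rm Ext}_\ell$; if $\ell$ were active then $p \in H_{i-1}$, violating $S_i \cap H_{i-1} = \emptyset$. Hence $\ell$ is inactive, and the highest inactive node $v^{**}$ whose cell contains $p$ is well-defined. A second ingredient I extract is a \emph{sibling-pairing invariant}: each of the activation rules 1(a), 2(a), and 2(b) activates a node together with its sibling, while rule~1(b) only activates the root, so a non-root node is active iff its sibling is. In particular, if $v^{**}$ is non-root, then its parent $v^*$ is active and its sibling $w$ is inactive.

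\textbf{Easy activations.} If $v^{**}$ is the root, step~1 activates the root (via 1(a) for a corner of $S_i$ inside $C_{\rm root}$, or via 1(b) for a corner outside $C_{\rm root}$), and we are done. Otherwise, I show that $v^{**}$ must be activated whenever any of the following holds: (i)~some corner of $S_i$ lies in $C_{v^{**}}$, so step~1(a) activates $v^{**}$ as the highest inactive node containing that corner; (ii)~some corner of $S_i$ lies in $C_w$, so step~1(a) activates $w$ (inactive) together with its sibling $v^{**}$; (iii)~$S_i$ crosses some cell $C_u$ in the subtree rooted at $v^*$: if $u=v^*$, step~2(a) activates both children of $v^*$, including $v^{**}$; otherwise $u$ is a proper descendant (hence inactive, since the entire subtrees of $v^{**}$ and $w$ are inactive by upward closure), and step~2(b) activates the highest inactive ancestor of $u$ (either $v^{**}$ or $w$) together with its sibling, so $v^{**}$ becomes active in both cases.

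\textbf{Residual case.} If none of (i)--(iii) holds, then $p \in S_i \cap C_{v^*} \neq \emptyset$, no corner of $S_i$ lies in $C_{v^*}$, and $S_i$ does not cross $C_{v^*}$; by the crossing trichotomy of Section~3, $S_i$ must contain a vertex of $C_{v^*}$, i.e.\ a vertex of $r_{\rm out}(v^*)$ or of $r_{\rm in}(v^*)$. Since $v^*$ is active, ${\rm Ext}_{v^*} \subseteq H_{i-1}$, so it is enough to exhibit a point of ${\rm Ext}_{v^*}$ inside $S_i$ to contradict the hypothesis. If the contained vertex $V$ is of $r_{\rm out}(v^*) = [X_1,X_2] \times [Y_1,Y_2]$, say $V = (X_1,Y_1)$, then $s_1 \leq X_1$ and $t_1 \leq Y_1$; the ``no corner of $S_i$ in $C_{v^*}$'' condition combined with $p \in S_i \cap C_{v^*}$ pins down (via a short case check) a half-plane $L^-$ bounded by a far side of $S_i$ (for instance $\{x \leq s_2\}$ or $\{y \leq t_2\}$) with $L^- \cap r_{\rm out}(v^*) \subseteq S_i$, and then Lemma~\ref{lem:halfplane} applied to $L^-$ at $v^*$ yields an extremal point of $v^*$ inside $S_i$. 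If $V$ is a vertex of $r_{\rm in}(v^*)$, I take the axis-parallel quadrant $Q$ at $V$ that contains $p$ (falling back to the axis-parallel strip through a side of $r_{\rm in}(v^*)$ in degenerate configurations) and apply Lemma~\ref{lem:corner} (using $V \in r_{\rm in}(v^*)$) or Lemma~\ref{lem:strip}, again using the no-corner condition to ensure that the produced extremal point lies inside $S_i$.

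\textbf{Main obstacle.} The main technical difficulty is this residual sub-case analysis: one must enumerate where the corners of $S_i$ can sit---outside $r_{\rm out}(v^*)$ or inside $r_{\rm in}(v^*)$---to verify that the chosen half-plane, strip, or quadrant is indeed contained in $S_i$. The stickiness of $r_{\rm in}(v^*)$ and the bounded aspect ratio of $r_{\rm out}(v^*)$, both built into Lemmas~\ref{lem:halfplane}--\ref{lem:corner}, are what keep this case-check finite.
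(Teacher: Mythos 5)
Your overall strategy is the same as the paper's: argue by contradiction, dispose of the cases where a corner of $S_i$ lies in $C_{v^{**}}$ or $C_w$, or where $S_i$ crosses some relevant cell (these correspond to the paper's ``we may assume'' reductions), and then handle the residual case by exhibiting a point of ${\rm Ext}_{v^*}$ inside $S_i$ via \Cref{lem_line,lem:halfplane,lem:strip,lem:corner}. Your sibling-pairing invariant and the handling of activations in steps 1(a), 2(a), 2(b) are correct. The overlap with the paper is genuine: the paper's Cases~1--4 are precisely the residual case you sketch.

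The gap is in the residual case itself, which is the crux of the proof. Your classification ``$V$ a vertex of $r_{\rm out}(v^*)$ $\Rightarrow$ half-plane via \Cref{lem:halfplane}, $V$ a vertex of $r_{\rm in}(v^*)$ $\Rightarrow$ quadrant at $V$ via \Cref{lem:corner}'' does not go through as stated. First, if $S_i$ contains exactly one vertex of $r_{\rm out}(v^*)$, say $(X_1,Y_1)$, then (using no corner of $S_i$ in $C_{v^*}$) the opposite corner $(s_2,t_2)$ of $S_i$ necessarily lies in $r_{\rm in}(v^*)$, and there is no axis-parallel half-plane $L^-$ with $L^-\cap r_{\rm out}(v^*)\subseteq S_i$; you need a quadrant, i.e.\ \Cref{lem:corner}, not \Cref{lem:halfplane}. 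Second, and more importantly, the quadrant must be apexed at the corner of $S_i$ that lies inside $r_{\rm in}(v^*)$, not at the vertex $V$ of $r_{\rm in}(v^*)$ that lies in $S_i$. Taking the quadrant $Q$ at $V$ that contains $p$ can fail: for example, if $r_{\rm in}(v^*)=[x_1,x_2]\times[y_1,y_2]$, $V=(x_1,y_2)$, $a=(a_1,a_2)\in r_{\rm in}(v^*)$ is the lower-right corner of $S_i$, and $p$ lies in the strip $[X_1,x_1]\times[a_2,y_2]$ (below-left of $V$), then the quadrant at $V$ containing $p$ is the lower-left one, whose intersection with $C_{v^*}$ reaches down to $y=Y_1<a_2$, outside $S_i$; \Cref{lem:corner} could then return an extremal point outside $S_i$. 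The paper's choice of apex $a\in r_{\rm in}(v^*)$ (a corner of $S_i$) yields $Q\cap r_{\rm out}(v^*)\subseteq S_i$ because the other three corners of $S_i$ lie outside $r_{\rm out}(v^*)$, so containment is automatic. This is exactly where the stickiness/no-corner structure is actually used, and ``a short case check'' has to be the enumeration of the paper's Cases~1--4; your sketch as written would produce an incorrect chain of inclusions in at least the configurations above.
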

\begin{proof}
    The algorithm activates the root in step $i=1$. In the remainder of the proof, we may assume that $i>1$ and the root is already active. 
    Before the arrival of $S_i$, we have $p\notin H_{i-1}$, so the leaf node of the BBD tree that contains $p$ is inactive. Let $v$ be the lowest active node in the BBD tree such that $C_v$ contains $p$. Let $u$ and $w$ be the two children of $v$ such that $p\in C_u$ (hence $p\notin C_w$). 
    
    We need to show that the algorithm activates $u$ in step~$i$. Suppose, for the sake of contradiction, that $u$ is inactive at the beginning of step~$i$. Then its sibling $w$ is also inactive at that time (since our algorithm always activates two siblings). 
    
   The algorithm activates the highest inactive nodes that contain any of the four corners of $S_i$ (and their siblings), as well as the highest inactive nodes corresponding to every cell crossed by $S_i$ (and their siblings). Therefore, we may assume that neither $C_u$ nor $C_w$ contains any corner of $S_i$, and neither of them is crossed by $S_i$. Since $C_v=C_u\cup C_w$, then $C_v$ does not contain any corner of $S_i$, either. If  $S_i$ crosses $C_v$, then at the beginning of step~$i$, node $v$ is active but its children $u$ and $w$ are inactive, and so the algorithm would activate both $u$ and $w$ in step~$i$. For this reason, we may also assume that $S_i$ does not cross $C_v$.

  We examine all possible positions of $C_v$ and $C_u$ relative to $S_i$. Recall that $C_v=r_{\rm out}(v)\setminus r_{\rm in}(v)$, where $r_{\rm in}(v)$ may be empty.

  \begin{figure}[htbp]
    \centering
    % \vspace{-.7\baselineskip}
    \includegraphics[page=4,scale=1]{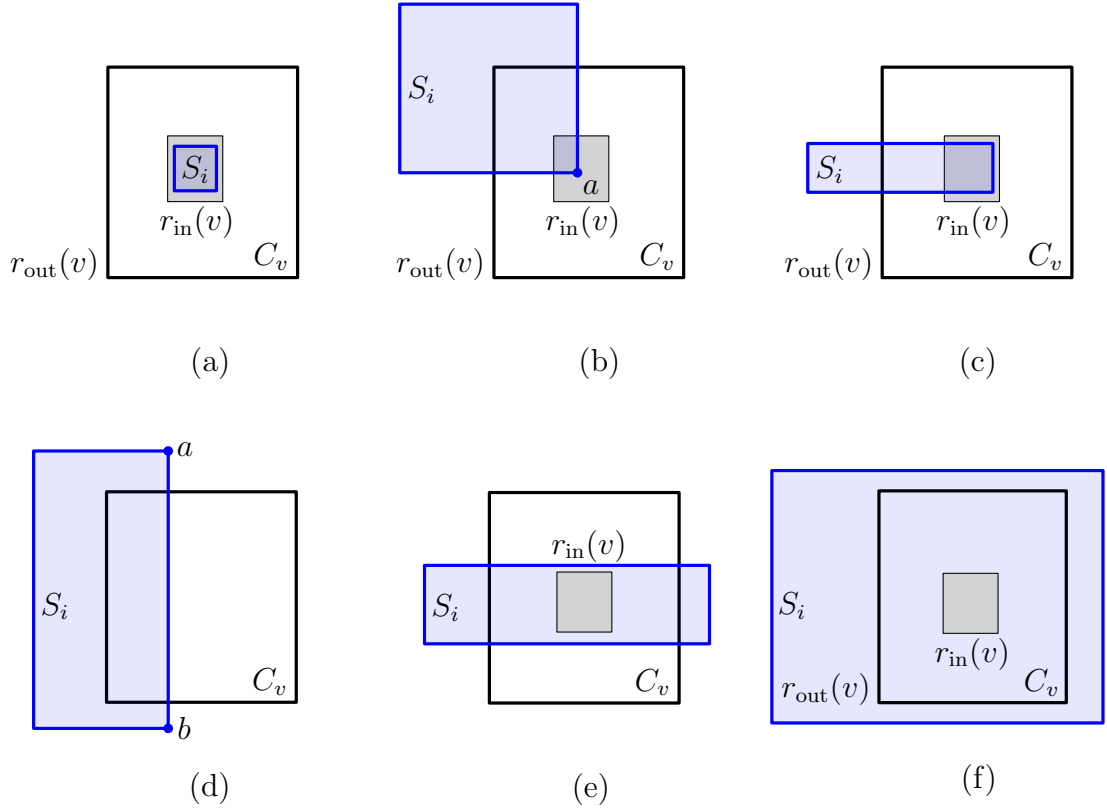}
    \caption{Illustration of (a) Case 1; (b) Case 2a; (c) Case 2b; (d) Case 3a; (e) Case 3b; and (f) Case 4.}
    \label{fig_cases}
    \vspace{-.35 cm}
\end{figure}

\smallskip\noindent
\textbf{Case~1: $S_i\subset r_{\rm out}(v)$; see~\Cref{fig_cases}(a).} 
In this case, all four corners of $S_i$ are in $r_{\rm in}(v)$. If all four corners of $S_i$ are in $r_{\rm in}(v)$, then $p\in S_i\subset r_{\rm in}(v)$, which contradicts the assumption that $p\in C_v$. Therefore, a corner of $S_i$ lies in $r_{\rm out}(v)\setminus r_{\rm in}(v)=C_v$. For this corner of $S_i$, the highest inactive node is $u$ or $w$. Consequently, the algorithm activates both siblings $u$ and $w$. In particular, $u$ is activated. 

\smallskip\noindent
\textbf{Case~2: $S_i\not\subset r_{\rm out}(v)$ but $r_{\rm out}(v)$ contains some corner of $S_i$; see~\Cref{fig_cases}(b-c).} Since $C_v$ does not contain any corners of $S_i$, then all corners of $S_i$ in $r_{\rm out}(v)$ are in $r_{\rm in}(v)$. Since $S_i\not\subset r_{\rm out}(v)$, then $r_{\rm in}(v)$ contains either one or two corners of $S_i$. We examine each case separately: 
% Recall that an axis-aligned rectangle $S_i$ crosses a cell $C_v$, for $v\in V(T)$ if $C_v\cap S_i\neq \emptyset$ but $C_v$ does not contain any vertex of $S_i$ and $S_i$ does not contain any vertex of $C_v$. 

\smallskip\noindent
\textbf{Case~2a: $r_{\rm in}(v)$ contains precisely one corner of $S_i$}. Assume w.l.o.g.\ that $r_{\rm in}(v)$ contains the lower-right corner of $S_i$ (as in~\Cref{fig_cases}(b)). Denote by $a$ the lower-right corner of $S_i$. Then $S_i$ also contains precisely one corner of $r_{\rm in}(v)$, namely its upper-left corner. Since the remaining three corners of $S_i$ are outside of both $r_{\rm in}(v)$ and $C_v$, they are outside of $r_{\rm out}(v)$. Consequently, $S_i$ also contains precisely one corner of $r_{\rm out}(v)$, namely the upper-left corner of $r_{\rm out}(v)$.
Since $p\in S_i\cap C_v$, then \Cref{lem:corner} yields ${\rm Ext}_v\cap S_i\neq \emptyset$.
However, $C_v$ was activated in a previous step. This implies that ${\rm Ext}_v\subset H_{i-1}$, hence $S_i\cap H_{i-1}\neq \emptyset$: a contradiction.

\smallskip\noindent
\textbf{Case 2b: $r_{\rm in}(v)$ contains exactly two corners of $S_i$.} Recall that an axis-aligned rectangle $S_i$ crosses a cell $C_v$, for $v\in V(T)$ if $C_v\cap S_i\neq \emptyset$ but $C_v$ does not contain any vertex of $S_i$ and $S_i$ does not contain any vertex of $C_v$. Consequently, $S_i$ crosses $C_v$: a contradiction.

\smallskip\noindent
\textbf{Case~3: $r_{\rm out}(v)$ does not contain any corner of $S_i$, but it intersects some edges of $S_i$; see Figure~\ref{fig_cases}(d-e).} 
Let $ab$ be an edge of $S_i$ that intersects $C_v$, where $a$ and $b$ are corners of $S_i$. Since $r_{\rm out}(v)$ contains neither $a$ nor $b$, then both $a$ and $b$ are outside of $r_{\rm out}(v)$, and so the two edges of $S_i$ orthogonal to $ab$ are also outside of $r_{\rm out}(v)$. Consequently, $r_{\rm out}(v)$ intersects one edge of $S_i$ or two parallel edges of $S_i$. 

\smallskip\noindent
\textbf{Case~3a: $r_{\rm out}(v)$ intersects precisely one edge of $S_i$.} Assume w.l.o.g.\ that $r_{\rm out}(v)$ intersects the right edge $ab$ of $S_i$; see~\Cref{fig_cases}(d). Let $L$ be the vertical line spanned by $ab$, and let $L^-$ be the left half-plane determined by $L$. 
Note that $p\in S_i\cap C_v$ implies $p\in L^-$, so $P\cap C_v\cap L^-\neq \emptyset$.
\Cref{lem:halfplane} yields ${\rm Ext}_v\cap S_i\neq \emptyset$.
However, $C_v$ was activated in a previous step. This implies that ${\rm Ext}_v\subset H_{i-1}$, hence $S_i\cap H_{i-1}\neq \emptyset$: a contradiction. 

%In particular, $S_i$ contains all leftmost points in $P\cap C_v$.  Since $v$ is active, one of its leftmost points is in ${\rm Ext}_v$. (Indeed, if $r_{\rm in}=\emptyset$, then ${\rm ext}(v)$ contains a leftmost point in $P\cap C_v$. If $r_{\rm in}\neq \emptyset$, then $r_{\rm out}$ is was subdivided into up to 9 rectangular regions, and ${\rm Ext}_v$ contains a leftmost point in each region.  A leftmost point in one of the regions is also leftmost in $P\cap C_v$.)

\smallskip\noindent
\textbf{Case~3b: $r_{\rm out}(v)$ intersects two parallel edges of $S_i$.}
Assume w.l.o.g.\ that $r_{\rm out}(v)$ intersects the left and right edges of $S_i$. 
If $S_i$ does not contain any corners of $r_{\rm in}(v)$, then $S_i$ does not contain any vertex of $C_v$, and so $S_i$ crosses $C_v$: a contradiction.

So we may assume that $S_i$ contains some corners of $r_{\rm in}(v)$.
\Cref{lem:strip} yields ${\rm Ext}_v\cap S_i\neq \emptyset$. However, $C_v$ was activated in a previous step. This implies that ${\rm Ext}_v\subset H_{i-1}$, hence $S_i\cap H_{i-1}\neq \emptyset$: a contradiction. 

%Since $S_i$ does not cross $C_v$, then either $S_i\cap C_v=\emptyset$ or $C_v$ contains at least one corner of $S_i$. By assumption, $C_v$ does not contain any corner of $S_i$. Therefore, if $S_i$ does not cross $C_v$, then the only possibility is that $S_i\cap C_v=\emptyset$. However, since $r_{\rm out}$ intersects some edges of $S_i$, it follows that $C_v\cap S_i\neq \emptyset$. Consequently, $S_i$ crosses $C_v$. 

\smallskip\noindent
\textbf{Case~4: $r_{\rm out}(v)$ lies in the interior of $S_i$; see Figure~\ref{fig_cases}(f).}  Since $p\in P\cap C_v$, then $P\cap C_v\neq \emptyset$, and so ${\rm Ext}_v\neq \emptyset$. Since $C_v$ is active, then ${\rm Ext}_v\subset H_{i-1}$, and $C_v\subset r_{\rm out}(v)\subset S_i$ implies that ${\rm Ext}_v\subset S_i$.
Consequently, $H_{i-1}\cap S_i\neq \emptyset$:  a contradiction.
\end{proof}

\section{Generalizations to Positive Homothets of Polygons}\label{sec:generalizations}
In \Cref{sec:alg}, we presented an $O(\log n)$-competitive algorithm for the online hitting set problem with $n$ points in the plane and axis-aligned squares (of aspect ratio  $\varrho= 1$). Axis-aligned squares are homothets of a unit square. Since a suitable linear transformation takes any parallelogram into a unit square, our result immediately extends to positive homothets of a parallelogram. 

\begin{corollary}\label{cor:parallelogram}
For every parallelogram $M$ in the plane, there is an $O(\log n)$-competitive algorithm for the online hitting set problem for any set of $n$ points in the plane and a sequence of positive homothets of $M$. 
\end{corollary}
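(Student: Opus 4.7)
The plan is to reduce the problem for positive homothets of an arbitrary parallelogram $M$ to the axis-aligned-square case settled by \Cref{thm:main} (with $\varrho=1$), via an affine change of coordinates.

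First, observe that for any parallelogram $M\subset \IR^2$ there is an invertible affine map $T\colon\IR^2\to\IR^2$ of the form $T(x)=Ax+c$ with $\det A\neq 0$ such that $T(M)=U$, where $U$ is the axis-aligned unit square. Indeed, after translating one vertex of $M$ to the origin, the two adjacent edges of $M$ give a basis of $\IR^2$, and the unique linear map sending this basis to the standard basis takes $M$ to $U$. The key algebraic observation is that $T$ commutes with positive homothety: if $S=\lambda M+b$ with $\lambda>0$, then
\[
T(S)\;=\;A(\lambda M+b)+c\;=\;\lambda\,(AM)+(Ab+c)\;=\;\lambda\,U+(Ab+c),
\]
so $T(S)$ is a positive homothet of the axis-aligned unit square, i.e., an axis-aligned square.

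Given this, I would run the following ``wrapper'' algorithm. Preprocess by computing the transformed point set $P':=T(P)$, which has $|P'|=n$ since $T$ is a bijection. Upon arrival of a positive homothet $S_i$ of $M$, feed the axis-aligned square $S_i':=T(S_i)$ to the algorithm of \Cref{thm:main} running on $P'$. Whenever that algorithm adds a point $q\in P'$ to its hitting set $H_i'$, add $T^{-1}(q)\in P$ to our hitting set $H_i$. Because $T$ is a bijection, $p\in S_i$ if and only if $T(p)\in S_i'$; therefore $H_i$ hits $S_i$ if and only if $H_i'$ hits $S_i'$, and hence $H_i$ is a valid hitting set for the original sequence. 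Also, $|H_i|=|H_i'|$.

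For the competitive ratio, note that the same bijective correspondence $H\leftrightarrow T(H)$ between subsets of $P$ and subsets of $P'$ preserves the hitting-set property for the original vs. the transformed sequence. In particular, an optimal offline hitting set $\opt'$ for $(P',\{S_1',\ldots,S_m'\})$ satisfies $|\opt'|=|\opt|$, where $\opt$ is optimal for the original instance. Thus
\[
\frac{|H_m|}{|\opt|}\;=\;\frac{|H_m'|}{|\opt'|}\;=\;O(\log n),
\]
by \Cref{thm:main} applied to axis-aligned squares (the case $\varrho=1$). I do not anticipate a genuine technical obstacle here; the only point requiring a moment of care is verifying that affine maps really do carry the family ``positive homothets of $M$'' onto the family ``positive homothets of $U$'' (as checked in the first paragraph) and that both the algorithm's output and the offline optimum scale identically under $T$, which is immediate from bijectivity.
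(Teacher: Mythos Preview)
Your proof is correct and follows essentially the same approach as the paper: reduce positive homothets of a parallelogram to axis-aligned squares via an invertible affine map and then invoke \Cref{thm:main} with $\varrho=1$. The paper states this reduction in a single sentence, while you spell out the bijection and the preservation of both the online output and the offline optimum explicitly.
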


We can further generalize \Cref{thm:main} to positive homothets of a polygon made of finitely many parallelograms (\Cref{thm:generalization}). 

\begin{lemma}\label{lem:union}
 %  Every simple polygon with $k\geq 3$ vertices is the union of at most $3(k-2)$ parallelograms;
  Every polygon with $k\geq 3$ vertices is the union of at most $5k-12$ parallelograms.
\end{lemma}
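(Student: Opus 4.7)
The plan is to reduce the statement to a single base decomposition of a triangle, and then apply a standard triangulation. Concretely, I would invoke the classical fact that any simple polygon with $k\geq 3$ vertices admits a triangulation into exactly $k-2$ triangles by non-crossing diagonals, and separately establish the sub-claim that an arbitrary triangle is the union of three parallelograms; combining the two ingredients yields the lemma.

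For the triangle sub-claim, given $\triangle ABC$, let $M_A$, $M_B$, $M_C$ denote the midpoints of the sides $BC$, $CA$, $AB$, respectively. The medial subdivision partitions $\triangle ABC$ into four congruent sub-triangles: the three corner triangles $\triangle AM_CM_B$, $\triangle BM_AM_C$, $\triangle CM_BM_A$, and the central medial triangle $\triangle M_AM_BM_C$. A one-line vector computation such as $\vec{AM_C}=\tfrac12(B-A)=\vec{M_BM_A}$ verifies that each of the quadrilaterals $AM_CM_AM_B$, $BM_AM_BM_C$, and $CM_BM_CM_A$ is a parallelogram. Each of these three parallelograms is the union of exactly one corner sub-triangle together with the medial sub-triangle, so the three together cover all four sub-triangles of the medial subdivision, hence all of $\triangle ABC$.

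Combining these ingredients, I would triangulate the given polygon into $k-2$ triangles and cover each triangle by three parallelograms using the sub-claim, producing a cover of the polygon by $3(k-2)=3k-6$ parallelograms in total. Since $3k-6\leq 5k-12$ for every $k\geq 3$ (with equality at $k=3$), the bound stated in the lemma follows at once; there is in fact slack for all $k\geq 4$.

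I do not anticipate any substantive obstacle. The only observation that is not entirely routine is the three-parallelogram cover of a triangle, but once the midpoint quadrilaterals are written down, both the parallelogram property and the covering property reduce to elementary vector identities. The existence of a triangulation with $k-2$ triangles and the arithmetic inequality $3k-6\leq 5k-12$ are entirely standard.
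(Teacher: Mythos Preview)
Your decomposition of a triangle into three parallelograms via the midpoints is exactly the argument the paper uses, so on that core step you match the paper. The difference is in the triangulation step, and it hides a real gap.

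You invoke the classical $k-2$ triangulation for a \emph{simple} polygon and then observe $3(k-2)=3k-6\leq 5k-12$. But the lemma, as the paper intends and proves it, is stated for a polygon that may have holes (this is why the bound is $5k-12$ rather than the tighter $3k-6$ you obtain). For a polygon with $k$ vertices and $h\geq 0$ holes, a triangulation has $k+2h-2$ triangles, not $k-2$; the paper then uses the elementary bound $h\leq k/3-1$ (each hole contributes at least three vertices) to get at most $5k/3-4$ triangles, and hence at most $3(5k/3-4)=5k-12$ parallelograms. Your argument, as written, does not cover the holed case: if $h>0$ then $3(k+2h-2)$ can exceed $3k-6$, and you never bound $h$ in terms of $k$.

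So the approach is the same as the paper's, but you are missing the one extra line that handles holes and that actually explains where the constant $5$ comes from. If the lemma were restricted to simple polygons, your proof would be complete and would in fact give the stronger bound $3k-6$.
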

% \todoin{at most is deleted from the proof. Also can be deleted from the lemma.}
\begin{proof}
    Every simple polygon $M$ with $k\geq 3$ vertices admits a triangulation with $k-2$ triangles. In general, a polygon $M$ with $k\geq 3$ vertices and $h\geq 0$ holes admits a triangulation with $k+2h-2$ triangles~\cite[Lemma~5.2]{ORourke87}. Since $h\leq k/3-1$, then $M$ is always the union of at most $5k/3-4$ triangles. 
    Every triangle $T$ is decomposed, by its three medians, into four congruent subtriangles: One containing the center of $T$, and three incident to each of the corners of $T$. The union of the central subtriangle and a corner subtriangle is a parallelogram; see \Cref{fig:union}. Thus $T$ is the union of three parallelograms, and consequently, $M$ is the union of at most $3(5k/3-4)=5k-12$ parallelograms.
\end{proof}

% {Using~\Cref{lem:union}, we} can further generalize \Cref{thm:main} to unions of parallelograms. 
\begin{figure}[htbp]
    \centering
    \vspace{-\baselineskip}
    \includegraphics[scale=1]{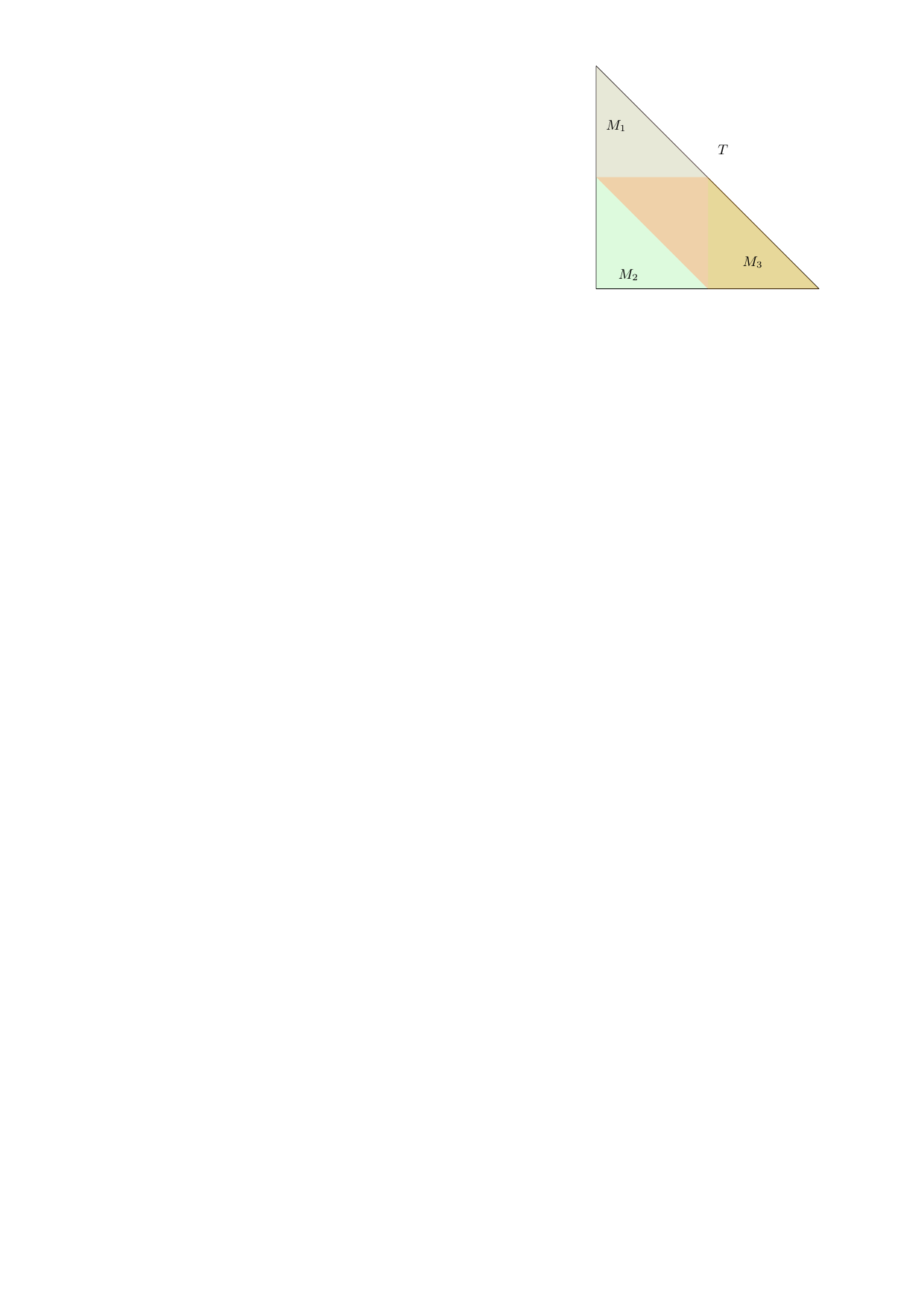}
    \caption{A triangle is the union of three parallelograms.}
    \label{fig:union}
    \vspace{-.2 cm}
\end{figure} 

\begin{lemma}\label{lem:combination}
    Let $T$ be a polygon that can be written as a union of $k$ parallelograms. Then there is an $O(k^2\log n)$-competitive deterministic algorithm for the online hitting set problem for any set $P$ of $n$ points in the plane, and a sequence of positive homothets of $T$.  
\end{lemma}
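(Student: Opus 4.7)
The plan is to exploit the hypothesis $T=\bigcup_{i=1}^{k} P_i$ and run $k$ independent online sub-algorithms in parallel. Let $\alg_i$ denote the $O(\log n)$-competitive algorithm from~\Cref{cor:parallelogram} for positive homothets of $P_i$, maintaining a hitting set $H_i\subseteq P$, and set $H=\bigcup_{i=1}^{k} H_i$. When a new $T$-homothet $\sigma=\lambda T+b$ arrives, do nothing if $\sigma\cap H\ne\emptyset$; otherwise, for every $i\in[k]$, feed the piece $\sigma_i=\lambda P_i+b$ to $\alg_i$ and let it update $H_i$. Since $\sigma=\bigcup_i \sigma_i$, any point hitting some $\sigma_i$ also hits $\sigma$, so $H$ remains a valid hitting set. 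Moreover, each unhit arrival of $\sigma$ costs at most $O(k)$ new points, because each $\alg_i$ adds only $O(1)$ extremal points per unhit piece (the bound $O(\varrho)=O(1)$ derived in \Cref{sec_online_alg}, specialized to parallelograms, which become unit squares under an affine transformation).

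For the competitive analysis, fix an offline optimum $\opt$ for the $T$-homothets and, for each $p\in\opt$, the subsequence $\mathcal{C}_p=\{\sigma_t:p\in\sigma_t\}$. I claim that the algorithm adds $O(k^2\log n)$ new points while responding to $\mathcal{C}_p$. Whenever some $\sigma\in\mathcal{C}_p$ is unhit on arrival, $p\notin H$; and since $p\in\sigma$ and $T=\bigcup_j P_j$, we have $p\in\sigma_j$ for at least one $j\in[k]$, so $\alg_j$ is handed a $P_j$-piece containing $p$ that is unhit from its own viewpoint (because $\sigma_j\cap H_j\subseteq \sigma\cap H=\emptyset$). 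The key per-point property from \Cref{ssec:analysis}---that for any point $q\in P$, the BBD-tree algorithm activates a cell containing $q$ at every unhit event containing $q$, and only $O(\log n)$ cells lie on the root-to-leaf path of $q$---applies verbatim to $\alg_j$. Hence after $O(\log n)$ unhit $P_j$-pieces containing $p$, the leaf cell of $p$ is activated and $p$ enters $H_j\subseteq H$; thereafter no $\sigma\in\mathcal{C}_p$ can be unhit. Summing the bound over $j\in[k]$ and using that each unhit $\sigma\in\mathcal{C}_p$ contributes to the count for at least one $j$, I obtain at most $O(k\log n)$ unhit events in $\mathcal{C}_p$, each costing $O(k)$ new points, for a total of $O(k^2\log n)$. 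Charging each $\sigma$ to one $p\in \opt\cap \sigma$ and summing over $p\in\opt$ yields the claimed competitive ratio.

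The main subtlety I expect is ensuring that the per-point bound from \Cref{ssec:analysis} applies in the present setting: there it was stated for points $p\in\opt$ of the \emph{square} problem, whereas here $\opt$ refers to the $T$-problem, and $\alg_j$ operates on the subsequence of $P_j$-pieces extracted only from those $T$-homothets that were unhit when they arrived. However, the proof of that lemma uses nothing about $\opt$ beyond $p\in\sigma\cap P$ together with the geometric case analysis of how an axis-aligned rectangle meets a BBD tree cell, so it extends to any $p\in P$ and any subsequence of $P_j$-pieces. Once this observation is in place, the remaining argument is routine bookkeeping of the two $k$-factors: one from presenting each unhit $\sigma$ to all $k$ sub-algorithms, the other from the cross-piece summation that allows $p$ to enter $H$ through any one of the $k$ trees.
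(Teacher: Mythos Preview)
Your proof is correct and follows essentially the same approach as the paper: run the $k$ parallelogram algorithms in parallel, feed all $k$ pieces of each unhit homothet, and bound the number of unhit events in $\mathcal{C}_p$ by $O(k\log n)$ via the per-point BBD-tree activation argument (the paper partitions $\mathcal{C}_p'$ by the least index $j$ with $p\in M_{i,j}$, whereas you sum over all $j$ with $p\in\sigma_j$, which gives the same bound up to constants). Your final paragraph correctly identifies and resolves the one subtlety the paper's own phrasing glosses over, namely that the per-point lemma from \Cref{ssec:analysis} applies to any $p\in P$ and any subsequence of inputs, not just to points of an optimum for the square problem; one other minor point you omit but the paper includes is feeding $\sigma_j$ to $\alg_j$ only when $P\cap\sigma_j\neq\emptyset$, so that Step~3 of $\alg_j$ is well defined.
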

\begin{proof}
    Assume that $T$ is the union of $k$ parallelograms, i.e., $T=\bigcup_{j=1}^k M_j$, where each $M_j$ is a parallelogram.  By \Cref{cor:parallelogram}, there is an $O(\log n)$-competitive deterministic algorithm $\alg_j$ for the online hitting set problem for the same point set $P$ and a sequence of positive homothets of the parallelogram $M_j$, for every $j\in \{1,\ldots, k\}$.  

%   \paragraph{
\smallskip\noindent\emph{Online algorithm.} 
   We now describe a deterministic online algorithm for the point set $P$ and a sequence $T_1, T_2,\ldots$ of positive homothets of $T$. For every $i\in \mathbb{N}$, we have 
   $T_i=a_iT+b_i$ for some scaling factor $a_i>0$ and translation vector $b_i\in \mathbb{R}^2$. Since $T=\bigcup_{j=1}^k M_j$, then 
   $T_i=\bigcup_{j=1}^k M_{i,j}$, where $M_{i,j}=a_i M_j+b_i$ is a positive homothet of the parallelogram $M_j$. 
  
   We maintain a hitting set $H_i\subset P$ for the first $i$ homothets $\{T_\ell: \ell\leq i\}$, which is initialized to $H_0=\emptyset$. We initialize the algorithm $\alg_j$ for $j=1,\ldots, k$, that each maintain a hitting set $H_{i,j}\subset P$ for some \emph{subset} of the first $i$ parallelograms $\{M_{\ell,j}: \ell\leq i\}$. We maintain that $H_i=\bigcup_{j=1}^k H_{i,j}$. 

   When a homothet $T_i=a_iT+b_i$ arrives. We initialize $H_i:=H_{i-1}$, and $H_{i,j}:=H_{i-1,j}$ for all $j=1,\ldots ,k$. If $T_i\cap H_i\neq \emptyset$, then no further changes are needed. Otherwise, we compute the parallelograms $M_{i,j}=a_iM_j+b_i$ for $j=1,\ldots , k$. For each $j=1,\ldots , k$, if $P\cap M_{i,j}\neq \emptyset$, then we feed $M_{i,j}$ to the algorithm $\alg_j$, which in turn adds new points to $H_{i,j}$.
   Finally, we update $H_i$ by setting $H_i:=\bigcup_{j=1}^k H_{i,j}$. 
   This completes the description of the algorithm. 

    \smallskip\noindent \emph{Competitive analysis.} The algorithm guarantees that $H_i$ is a hitting set for the first $i$ objects $\{T_1,\ldots , T_i\}$. It is also clear that for each new object $T_i$, we add $O(k)$ new points to $H_i$: Since each algorithm $\alg_j$ adds $O(1)$ points to $H_{i,j}$.  

    Let $\opt\subset P$ be an offline optimum, i.e., a minimum hitting set for $\mathcal{C}=\{T_1,\ldots , T_m\}$. For each point $p\in \opt$, let $\mathcal{C}_p=\{T_i\in \mathcal{C}: p\in T_i\}$ be the set of objects hit by $p$. It is sufficient to show that for every $p$, the algorithm adds $O(k^2\log n)$ points to $H_i$ to hit objects in $\mathcal{C}_p$.

    Let $\mathcal{C}_p'$ be a set of objects $T_i\in \mathcal{C}_p$ such that our algorithm adds new points to the hitting set in step~$i$. Since our algorithm adds $O(k)$ points to the hitting set in each step, it is enough to show that $|\mathcal{C}_p'|\leq O(k\log n)$.     
    We further partition $\mathcal{C}_p'$ based on which parallelograms are hit by point $p$. For $j=1,\ldots , k$, let $\mathcal{C}_{p,j}' =\{T_i\in \mathcal{C}_p' : p\in M_{i,j}, \mbox{ \rm and } p\notin M_{i,j'} \mbox{ \rm for } j'<j\}$; and $\mathcal{M}_{p,j}' =\{M_{i,j}: T_i\in  \mathcal{C}_{p,j}'\}$. 
    By definition, $\mathcal{C}_p=\bigcup_{j=1}^k \mathcal{C}_{p,j}'$. 
    Note that $\mathcal{M}_{p,i}'$ is a set of homothets of the parallelogram $M_j$ that contain $p\in P$. By assumption, algorithm $\alg_j$ is $O(\log n)$-competitive, so it adds $O(\log n)$ points to the hitting set in response to parallelograms in $\mathcal{M}_{p,j}'$. Recall that we feed a parallelogram $M_{i,j}\in \mathcal{M}_{p,j}'$ to $\alg_j$ only if object $T_i$ has not been hit by $H_{i-1}$, 
    and so $\alg_j$ adds at least one new point to the hitting set in step $i$.
    Consequently, we have $|\mathcal{M}_{p,j}'|\leq O(\log n)$. Overall, we obtain  $|\mathcal{C}_p'|=\sum_{j=1}^k |\mathcal{C}_{p,j}'| = \sum_{j=1}^k |\mathcal{M}_{p,j}'|\leq O(k\log n)$, as required. Hence, the theorem follows.
    \end{proof}
   
The combination of \Cref{lem:union,lem:combination} immediately implies \Cref{thm:generalization}.

% \begin{lemma}\label{lem:union}
%  %  Every simple polygon with $k\geq 3$ vertices is the union of at most $3(k-2)$ parallelograms;
%   Every polygon with $k\geq 3$ vertices is the union of at most $5k-12$ parallelograms.
% \end{lemma}
% % \todoin{at most is deleted from the proof. Also can be deleted from the lemma.}
% \begin{proof}
%     Every simple polygon $M$ with $k\geq 3$ vertices admits a triangulation with $k-2$ triangles. In general, a polygon $M$ with $k\geq 3$ vertices and $h\geq 0$ holes admits a triangulation with $k+2h-2$ triangles~\cite[Lemma~5.2]{ORourke87}. Since $h\leq k/3-1$, then $M$ is always the union of at most $5k/3-4$ triangles. 
%     Every triangle $T$ is decomposed, by its three medians, into four congruent subtriangles: One containing the center of $T$, and three incident to each of the corners of $T$. The union of the central subtriangle and a corner subtriangle is a parallelogram; see \Cref{fig:union}. Thus $T$ is the union of three parallelograms, and consequently, $M$ is the union of at most $3(5k/3-4)=5k-12$ parallelograms.
% \end{proof}

\section{Conclusions}
\label{sec:conclusion}

Our main result (\Cref{thm:main}) is an $O(\log n)$-competitive algorithm for the online hitting set problem for $n$ points in the plane, and a sequence of axis-aligned squares (or axis-aligned rectangles of bounded aspect ratio). This is the first online hitting set algorithm that is $O(\log n)$-competitive for geometric objects of arbitrary sizes in the plane. Our result further generalizes to positive homothets of any simple $k$-gon for $k\geq 3$, and achieves $O(k^2\log n)$-competitive algorithm. Even though a disk can be approximated by a regular $k$-gon as $k\to \infty$, our generalized result does not imply any competitive algorithm for disks. It remains an open problem whether an $O(\log n)$-competitive online hitting set algorithm for $n$ points and disks (of arbitrary radii), or positive homothets of a polygon $P$ (with arbitrarily many vertices) exists.

Does our result generalize to higher dimensions? Is there $O(\log n)$-competitive online hitting set algorithm for axis-aligned cubes of arbitrary sizes in $\mathbb{R}^d$ for constant dimension $d$? Is there one for $d=3$?
While BBD trees generalize to $\mathbb{R}^d$, for any constant dimension $d\in \mathbb{N}$, our online algorithm does not generalize to (hyper-)cubes in $d$-space. The reason is that the extremal points (cf.~\Cref{ssec:ext}) do not necessarily capture the intersection $C_v\cap R$ of a cell $C_v$ of the BBD tree and an axis-aligned box. In the plane, we have shown that $C_v$ contains a vertex of $R$, or $C_v\cap R$ is a crossing intersection, or $C_v\cap R$ behaves as an axis-aligned halfplane with respect to the sub-rectangles of $C_v$ (\Cref{ssec:ext}). However, in $\mathbb{R}^d$, $d\geq 3$, it is possible that $R$ contains exactly one edge of cell $C_v=r_{\rm out}(v)$ and yet $C_v\cap R$ does not contain any of the extremal points of $C_v$. It remains open whether $O(\log n)$-competitive algorithms are possible for geometric objects of arbitrary sizes in dimensions $d\geq 3$.

\bibliographystyle{plainurl}
\bibliography{square}%-no-doi}
\end{document}